\theoremstyle{plain}\newtheorem{Theorem}{Theorem}[section]
\theoremstyle{plain}\newtheorem{Conjecture}[Theorem]{Conjecture}
\theoremstyle{plain}\newtheorem{Corollary}[Theorem]{Corollary}
\theoremstyle{plain}\newtheorem{Lemma}[Theorem]{Lemma}
\theoremstyle{plain}\newtheorem{Proposition}[Theorem]{Proposition}
\theoremstyle{definition}
\theoremstyle{definition}
\theoremstyle{definition}
\theoremstyle{definition}\newtheorem{Remark}[Theorem]{Remark}
\theoremstyle{definition}
\theoremstyle{definition}
\theoremstyle{definition}
\theoremstyle{definition}
\theoremstyle{definition}
\theoremstyle{definition}
\theoremstyle{definition}
\theoremstyle{definition}
\theoremstyle{definition}\newtheorem{Notation/Definition}
[Theorem]{Notation/Definition}
\theoremstyle{definition}
\def\CB{{\mathcal{B}}}
\def\PAut{\mathrm{PAut}}
\def\dim{\mathrm{dim}}          
\def\stab{\mathrm{stab}}
\def\wt{\mathrm{wt}} 
\def\min{\mathrm{min}}
\newcommand{\FF}{{\mathbb{F}}}
\begin{document}

\title{Involutory permutation automorphisms of binary linear codes}
\date{\today}
\author{{Fatma Altunbulak Aksu, Roghayeh Hafezieh and {\.I}pek Tuvay}}
\address{Mimar Sinan Fine Arts University, Department of Mathematics, 34380, Bomonti, \c{S}i\c{s}li, Istanbul, Turkey}
\email{fatma.altunbulak@msgsu.edu.tr}
\address{Gebze Technical University, Department of Mathematics, 41400, Gebze, Kocaeli, Turkey}
\email{roghayeh@gtu.edu.tr}
\address{Mimar Sinan Fine Arts University, Department of Mathematics, 34380, Bomonti, \c{S}i\c{s}li, Istanbul, Turkey}
\email{ipek.tuvay@msgsu.edu.tr}

\thanks{}

\keywords{Group code, Quasi group code, permutation automorphism of a code}
\subjclass[2010]{94B05, 11T71, 20B05}

\begin{abstract}
{ We investigate the properties of binary linear codes of even length whose permutation 
automorphism group is a cyclic group generated by an involution. Up to dimension or co-dimension $4$, 
we show that there is no quasi group code 
whose permutation automorphism group is isomorphic to $C_2$. By generalizing the method 
we use to prove this result, we obtain results on the structure of putative extremal self-dual $[72, 36, 16]$ 
and $[96, 48, 20]$ codes in the presence of an involutory permutation automorphism.}
\end{abstract}

\maketitle

\section{Introduction}

A {\it binary linear code $C$ of length $n$} is a subspace of $\FF_2^n$. The symmetric group $S_n$ 
acts on $\FF_2^n$ by permuting the set of its coordinates. The group of permutations which sends $C$ onto itself  
is called {\it the permutation automorphism group of $C$} and is denoted by $\PAut(C)$. The 
permutation automorphism group of a code carries a lot of information about the algebraic structure of 
the code. For instance, 
if there is a non-trivial $G \leq \PAut(C)$ for a binary linear code $C$, 
the problem of determining $C$ by the help of $G$ is a well-known and difficult problem in coding theory 
(see for example \cite{Bor2015}).

Our intention to attack this problem is to investigate the structure of binary linear codes in the 
presence of an involutory permutation automorphism. In \cite{Bou2000}, a method for constructing self-dual 
codes with an involutory permutation automorphism group is given. Later in \cite{Bou2002}, 
extremal doubly even self-dual binary codes of length $24m$ with an involutory permutation automorphism 
are considered. It is shown in \cite[Theorem 5.3]{Bou2002} that a putative self-dual $[72, 36, 16]$ code 
can not have an involutory automorphism with some fixed points. 
Moreover, there are five candidates for the permutation automorphism group of 
a putative self-dual $[72, 36, 16]$ code and two of the cantidates are $C_2$ and $C_2 \times C_2$ (see \cite[Theorem 6.3]{Bor2015}). 
Inspired by all of these, 
our aim is to deduce some structural results on binary linear codes having an involutory permutation 
automorphism. 
We first discuss linear codes with an arbitrary involutory permutation automorphism, then we focus 
on the ones with involutory fixed point free permutation automorphisms. The fixed subcode associated to 
a permutation automorphism gives a lot of information about the code itself, as we can see in \cite{Bou2000, Bou2002, BC2018}
In particular, the fixed subcode of an involutory permutation automorphism plays a central role in our arguments.

Borello and Willems, in \cite{BW2022}, give an intrinsic 
description of linear codes of length $n$ whose permutation automorphism group contains a free subgroup of $S_n$. These 
codes are called {\it quasi group codes}. Abelian codes, group codes, quasi-cyclic 
and quasi-abelian codes belong to the class of quasi group codes. Hence, this approach gives a 
unified way to consider these different class of codes. In this paper, we show that there is no quasi group 
code $C$ of even length $n>2$ and dimension $k$ or $n-k$ where $1\leq k \leq 4$ where $\PAut(C)\cong C_2$ (see Corollary 
\ref{quasi}). On the other hand, by the methods we develop to prove this result, we obtain some results on 
the structures of putative extremal self-dual $[72, 36, 16]$ and $[96, 48, 20]$ codes.

The organization of the paper is as follows. In Section $2$, we give the background and some lemmas 
concerning the dimension of the fixed subspace of an involutory permutation automorphism. 
In Section $3$, we show that if a binary linear code $C$ of even length 
with dimension or co-dimension $2$ satisfies $\PAut(C)\cong C_2$, then its length is $2$ or $4$. We also 
determine all binary linear codes $C$ of length $4$ with $\PAut(C)\cong C_2$ up to permutation equivalence.
Section $4$ deals with binary linear codes of even length with involutory fixed point free permutation automorphisms 
with dimension $3$ and dimension $4$. Also, there are some important corollaries together with a conjecture 
in this section. In Section $5$, 
we give a generalization of our idea that is used in the proof of Theorem \ref{dim4} and we present the applications on 
putative extremal self-dual $[72, 36, 16]$ and $[96, 48, 20]$ codes.

\section{Background and lemmas}

In this paper, we deal with binary linear codes. Recall that a {\it binary linear code $C$ of length 
$n$} is a subspace of $\FF_2^n$. An element $c=c_1 c_2 \ldots c_n \in C$ is called a {\it codeword of $C$} 
and its {\it Hamming weight} is defined to be 
$$\wt(c)=|\{ i\in \{1, \ldots, n\} \ | \ c_i\neq 0\}|.$$ 
The {\it minimum weight of $C$} is defined by $\wt(C)=\min \{ \wt(c) \ | \ c\in C\backslash \{0\} \}$. If $C$ is 
a binary linear code with length $n$, dimension $k$ and minimum weight $d$, then $C$ is called 
a {\it binary $[n, k, d]$ code}. For a code $C\subseteq \FF_2^n$ and a non-negative integer $i$, let $A_i(C)$ denote 
the number of codewords in $C$ of weight $i$. The sequence $(A_0(C), A_1(C), \ldots, A_n(C))$ is 
called {\it the weight distribution of $C$. }

There is a natural action of $S_n$ on $\FF_2^n$ induced by the action of $S_n$ on the set $\{1, \ldots, n\}$. More precisely, 
for $c=c_1 c_2 \ldots c_n \in \FF_2^n$ and $\beta \in S_n$, 
$$c^{\beta}=c_{\beta^{-1}(1)} c_{\beta^{-1}(2)} \ldots c_{\beta^{-1}(n)}.$$ 
For a linear code $C$, set $C^{\beta}=\{c^{\beta} \ | \ c\in C\}$. It is easy to see that 
$C^{\beta}$ is also a linear code. A permutation $\beta \in S_n$ is called a {\it permutation automorphism} 
or simply an {\it automorphism} of $C$ if $C^{\beta}=C$. The stabilizer  
$$\PAut(C)=\{\beta \in S_n \ | \ C^{\beta}=C \}$$ 
is called the {\it permutation automorphism group of $C$}.

For a linear code $C$ of length $n$, if there exists a subgroup $G$ of $\PAut(C)$ which is transitive 
and free, then $C$ is called a {\it group code} or a {\it $G$-code}. This characterization is given 
first in \cite{BRS2009} then in \cite{BW2022}. where a weaker notion is introduced. $C$ is called a {\it quasi group code} or a {\it quasi-$G$ code} if 
there is a free subgroup $G$ inside $\PAut(C)$. 

Two linear codes $C_1$ and $C_2$ of length $n$ 
are called {\it permutation equivalent} if there is a $\beta \in S_n$ such that $C_1^{\beta}=C_2$. In this case,
note that $\PAut(C_2)=\PAut(C_1)^{\beta}$. So if two linear codes are permutation equivalent, then their 
permutation automorphism groups are conjugate in $S_n$. In this paper, we denote the $k$-cycles in $S_n$ 
as $(a_1, a_2, \ldots, a_k)$ where $a_1, a_2, \ldots, a_k \in \{1, 2, \ldots, n\}$.

For a permutation automorphism $\beta$ of a binary linear code $C$,
the 
fixed point set of $\beta$, which we denote by $F_{\beta}(C)$, is defined as 
$$F_{\beta}(C)=\{ c \in C \ | \ c^{\beta}=c\}.$$
Note that, this set is closed under addition, so $F_{\beta}(C)$ is a subspace of $C$. This 
is called the {\it fixed subcode of $C$ under $\beta$}.

We begin with an observation that holds for any permutation automorphism of a binary linear code which is 
an involution. It turns out that there is a relationship between the dimension of the code and its fixed subcode 
under that involulion. 
The notation $\lceil x \rceil$ denotes the smallest integer greater or equal to $x$.

\begin{Lemma}\label{halfdim}
Let $C$ be a binary linear code of length $n$ and dimension $k$. Let $\beta \in \PAut(C)$ be an involution and 
$f=\dim(F_{\beta}(C))$. We have that $f\geq (k-f)$ or equivalently $f \geq \lceil k/2 \rceil$.
\end{Lemma}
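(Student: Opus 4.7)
The plan is to exploit the fact that we are working over $\FF_2$, so that the involution $\beta$, viewed as a linear endomorphism of $C$, satisfies $(\beta - \id)^2 = \beta^2 - \id = 0$ identically (the middle term $2\beta$ vanishes in characteristic $2$).

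More precisely, I would proceed as follows. First, I would note that since $\beta \in \PAut(C)$, the map $\beta$ restricts to an $\FF_2$-linear automorphism of $C$, and by definition
\[
F_\beta(C) = \{c \in C : c^\beta = c\} = \ker(\beta - \id|_C).
\]
Setting $\varphi = \beta - \id|_C$, the fact that $\beta$ is an involution together with characteristic $2$ gives $\varphi^2 = \beta^2 - \id = 0$, so $\mathrm{im}(\varphi) \subseteq \ker(\varphi) = F_\beta(C)$. Hence $\dim \mathrm{im}(\varphi) \leq f$.

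Next, applying the rank-nullity theorem to $\varphi : C \to C$, I would write
\[
k = \dim C = \dim \ker(\varphi) + \dim \mathrm{im}(\varphi) = f + \dim \mathrm{im}(\varphi),
\]
so $k - f = \dim \mathrm{im}(\varphi) \leq f$. This gives $f \geq k - f$, i.e.\ $2f \geq k$, and therefore $f \geq \lceil k/2 \rceil$, as claimed.

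There is essentially no obstacle here: the whole argument rests on the single observation that $(\beta - \id)^2 = 0$ over $\FF_2$, which forces the image of $\beta - \id$ to be contained in its kernel. If anything, the only thing to be careful about is to make sure that $\varphi$ is regarded as an endomorphism of $C$ itself (not of the ambient $\FF_2^n$), so that rank-nullity yields exactly $\dim C = k$ on the left-hand side; this follows because $F_\beta(C)$ is defined as the fixed points inside $C$, and $\beta$ stabilizes $C$ by assumption.
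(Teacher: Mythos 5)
Your proof is correct, and it is essentially the paper's argument in more conceptual clothing: the paper extends a basis of $F_\beta(C)$ to a basis of $C$ and checks by hand that the vectors $w_{(i)}+w_{(i)}^{\beta}$ are linearly independent elements of $F_\beta(C)$, which is exactly your computation of $\dim\mathrm{im}(\beta-\id)$ together with the inclusion $\mathrm{im}(\beta-\id)\subseteq\ker(\beta-\id)$. The rank--nullity packaging is cleaner, but both proofs rest on the same map $c\mapsto c+c^{\beta}$.
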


\begin{proof}
If $C=F_{\beta}(C)$, the claim holds trivially. So assume that $C\neq F_{\beta}(C)$. 
Let's start with a basis $\CB_1$ of $F_{\beta}(C)$ and then extend it to a basis $\CB$ of $C$. 
Let $$\CB \backslash \CB_1= \{ w_{(1)}, \ldots, w_{(k-f)}\}$$
and for any $i=1, \ldots, k-f$ define $x_{(i)}=w_{(i)}+w_{(i)}{}^{\beta}$. Then since $w_{(i)} \not\in F_{\beta}(C)$, we have that 
$x_{(i)}\neq 0$. Also $x_{(i)}=x_{(i)}{}^{\beta}$, so that 
$x_{(i)} \in F_{\beta}(C)$. If for some $\lambda_i \in \FF_2$, $\sum_{i=1}^{k-f} \lambda_i x_{(i)}=0$, it follows that 
$$\sum_{i=1}^{k-f} \lambda_i w_{(i)}=(\sum_{i=1}^{k-f} \lambda_i w_{(i)})^{\beta}$$
and so $\sum_{i=1}^{k-f} \lambda_i w_{(i)} \in F_{\beta}(C) \cap \langle \CB \backslash \CB_1 \rangle$. 
However, $F_{\beta}(C) \cap \langle \CB \backslash \CB_1 \rangle =\{0\}$, which implies that 
$\sum_{i=1}^{k-f} \lambda_i w_{(i)}=0$. Hence by the linear independence of the set $\CB \backslash \CB_1$, it 
follows that $\lambda_i=0$ for $i=1, \ldots, k-f$. Therefore, $\{x_{(1)}, \ldots, x_{(k-f)} \}$ is a linearly 
independent subset of $F_{\beta}(C)$, which yields that $f\geq (k-f)$.
\end{proof}

\begin{Lemma}\label{dim3}
Suppose that $C$ is a binary linear code of dimension $k$ and length $n\geq 4$. If $\PAut(C)=\langle \beta \rangle$, 
for $\beta =(1,2)(3,4) \ldots (t, t+1)$ where $t$ is an odd integer such that 
$3 \leq t \leq (n-1)$ is an arbitrary odd integer, then $\dim(F_{\beta}(C))  \leq (k-1)$. In particular, 
if $k\geq 2$ we have that $\lceil k/2\rceil \leq \dim(F_{\beta}(C))  \leq (k-1)$.
\end{Lemma}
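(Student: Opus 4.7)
The plan is to prove the upper bound $\dim(F_{\beta}(C)) \leq k-1$ by contradiction, and then obtain the ``in particular'' clause immediately from Lemma \ref{halfdim}. The core idea is that if $\beta$ fixed all of $C$ pointwise, then the coordinate symmetries forced on every codeword would let us exhibit a \emph{smaller} permutation lying in $\PAut(C)$, violating $\PAut(C) = \langle \beta \rangle$.

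Concretely, I would assume towards a contradiction that $F_{\beta}(C) = C$. Then for every $c = c_1 c_2 \ldots c_n \in C$, the equality $c^{\beta} = c$ together with the explicit cycle structure of $\beta$ forces
\[
c_1 = c_2,\quad c_3 = c_4,\quad \ldots,\quad c_t = c_{t+1}.
\]
In particular, $c_1 = c_2$. Now consider the transposition $\tau = (1,2) \in S_n$. For any $c \in C$ we have $c^{\tau} = c_2 c_1 c_3 \ldots c_n = c_1 c_2 c_3 \ldots c_n = c$, so $\tau \in \PAut(C)$. Since $t \geq 3$, the involution $\beta$ has at least two transpositions in its cycle decomposition, hence $\tau \neq \beta$ and clearly $\tau \neq \id$. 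Therefore $\tau \notin \langle \beta \rangle = \PAut(C)$, a contradiction. This rules out $F_{\beta}(C) = C$, giving $\dim(F_{\beta}(C)) \leq k-1$.

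For the final assertion, assuming $k \geq 2$ and applying Lemma \ref{halfdim} to the involution $\beta \in \PAut(C)$ yields the lower bound $\dim(F_{\beta}(C)) \geq \lceil k/2 \rceil$, so combining with the upper bound just established gives $\lceil k/2 \rceil \leq \dim(F_{\beta}(C)) \leq k-1$.

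There is essentially no serious obstacle here: the only point requiring care is spotting the right ``witness'' permutation. The hypothesis $t \geq 3$ is precisely what guarantees that singling out one transposition of $\beta$ produces a permutation that is neither trivial nor equal to $\beta$. If one instead had $t = 1$ (so $\beta = (1,2)$ itself), this strategy would collapse, which is why the lemma's hypothesis excludes that case.
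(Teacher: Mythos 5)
Your proof is correct and follows essentially the same route as the paper: assume $F_{\beta}(C)=C$, deduce $c_i=c_{i+1}$ on the transposed pairs, exhibit a single transposition $(i,i+1)$ lying in $\PAut(C)$ but distinct from $\beta$ (using $t\geq 3$), and then invoke Lemma \ref{halfdim} for the lower bound. No gaps.
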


\begin{proof}
Let $f=\dim(F_{\beta}(C))$. If $f=k$, it follows that $F_{\beta}(C)=C$. Then every element $c$ of $C$ satisfy 
$c_i=c_{i+1}$ for all odd $i \in \{1, 2, \ldots, t\}$. But then the transposition $(i, i+1)\in \PAut(C)$ for any 
odd $i \in \{1, 2, \ldots, t\}$. This implies that $\PAut(C) \neq \langle \beta \rangle$, a contradiction, so 
$f \neq k$ and the result follows from Lemma \ref{halfdim}.

\end{proof}

\section{Codes of even length with lower dimensions or co-dimensions}

We begin with observations on permutation automorphisms of binary linear codes of even length 
whose dimension or co-dimension is equal to $1$. 

\begin{Proposition} \label{dim1}Let $C$ be a binary linear code of length $n=2 m$ where $m\geq 2$. Assume 
that $\dim(C)=1$ or $\dim(C)=n-1$. Then the following hold.
\begin{enumerate}
\item[\rm (i)] We have that $\PAut(C)\ncong C_2$. 
\item[\rm (ii)]  If $\wt(C)$ is even and $\wt(C)\neq n$, then $C$ is a quasi group code which is not a group code.
\item[\rm (iii)] Assume that $n=2^r$ (or equivalently $m=2^{r-1}$) and $\wt(C)\neq n$. Then $C$ is a quasi group code 
which is not a group code if and only if $\wt(C)$ is even.
\end{enumerate}

\end{Proposition}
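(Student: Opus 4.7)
The plan is to reduce everything to the $\dim(C) = 1$ case and exploit the very explicit form of $\PAut(C)$ there. Since the map $C \mapsto C^{\perp}$ is $S_n$-equivariant, one has $\PAut(C) = \PAut(C^{\perp})$, so the codimension-one case follows from the dimension-one case applied to $C^{\perp}$. In the dimension-one case, writing $C = \langle c \rangle$ with $w = \wt(c)$, a permutation $\beta$ lies in $\PAut(C)$ iff $c^{\beta} = c$, i.e., iff $\beta$ preserves $\supp(c)$ setwise; hence $\PAut(C)$ is the direct product of the symmetric group on $\supp(c)$ with the symmetric group on $\{1, \ldots, n\} \setminus \supp(c)$, giving $\PAut(C) \cong S_{w} \times S_{n-w}$. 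This structural decomposition is the engine behind all three parts.

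For (i) I would simply count: $|\PAut(C)| = w!\,(n-w)!$, and for this to equal $2$ forces $\{w, n-w\} = \{1, 2\}$ and thus $n = 3$, contradicting $n = 2m \geq 4$. For (ii), under the hypothesis both $w$ and $n-w$ are positive and even, so I would construct an explicit fixed-point-free involution $\beta \in \PAut(C)$ as a product of $w/2$ transpositions pairing up elements of $\supp(c)$ together with $(n-w)/2$ transpositions pairing up elements of its complement; then $\langle \beta \rangle \cong C_2$ is a free subgroup of $\PAut(C)$, so $C$ is a quasi group code. To rule out being a group code, I would note that $\PAut(C) \cong S_w \times S_{n-w}$ stabilizes the nontrivial partition of the coordinate set into $\supp(c)$ and its complement, hence is intransitive on $\{1, \ldots, n\}$ and cannot contain any regular subgroup.

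For (iii), the backward implication is (ii) together with the same intransitivity remark. For the forward implication, given a nontrivial free subgroup $G \leq \PAut(C)$, freeness of the $G$-action on $n = 2^r$ points forces $|G|$ to divide $2^r$ by orbit-stabilizer, so $G$ is a nontrivial $2$-group and therefore contains an involution $\sigma$; freeness then forces $\sigma$ to be fixed-point-free on $\{1, \ldots, n\}$, and splitting $\sigma = (\sigma_1, \sigma_2) \in S_w \times S_{n-w}$ shows each component is itself a fixed-point-free involution, so both $w$ and $n - w$ are even.

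The part requiring the most care is the codimension-one side of (ii) and (iii): for $\dim(C) = n-1$ the minimum weight $\wt(C)$ can only be $1$ or $2$, so before invoking the dimension-one analysis through $\PAut(C) = \PAut(C^{\perp})$ one must correctly translate the stated weight hypothesis on $C$ into the corresponding condition on the weight of the unique nonzero vector of $C^{\perp}$; this bookkeeping step is where I would expect to spend the most time, after which the three arguments above close out the proof.
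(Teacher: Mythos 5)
Your treatment of the $\dim(C)=1$ case is correct and is essentially identical to the paper's: the identification $\PAut(C)=\stab_{S_n}(v)\cong S_w\times S_{n-w}$, the order count for (i) (you omit the possibility $\{w,n-w\}=\{2,0\}$, i.e.\ $n=2$, but that is also excluded by $n\ge 4$), the explicit fixed-point-free involution plus the intransitivity remark for (ii), and the orbit--stabilizer / $2$-group argument for (iii) are exactly the steps the paper takes.

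The genuine gap is the step you explicitly defer as ``bookkeeping'': translating the weight hypotheses of (ii) and (iii) through duality. This cannot be carried out, because the hypotheses do not dualize, and the literal codimension-one statements are where the trouble sits. If $\dim(C)=n-1$ then $\wt(C)\in\{1,2\}$, so ``$\wt(C)$ even and $\wt(C)\neq n$'' forces $\wt(C)=2$, hence $C^{\perp}=\langle 11\cdots 1\rangle$ and $C$ is the even-weight code; but then $\PAut(C)=S_n$ contains the regular cyclic group generated by an $n$-cycle, so $C$ \emph{is} a group code and the conclusion of (ii) fails. Similarly for the forward implication of (iii): take $n=4$ and $C=\{c\in\FF_2^4 : c_1=c_2\}$ of dimension $3$; then $\PAut(C)=\langle(1,2)\rangle\times\langle(3,4)\rangle$ contains the free subgroup $\langle(1,2)(3,4)\rangle$, so $C$ is a quasi group code that is not a group code, yet $\wt(C)=1$ is odd. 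The only coherent reading is that the weight condition in (ii) and (iii) refers to the weight of the unique nonzero word of the $1$-dimensional code (i.e.\ of $C^{\perp}$ when $\dim(C)=n-1$). The paper's own proof makes this substitution silently when it declares that ``it is enough to prove all assertions for $1$-dimensional codes'' --- a reduction that is valid for (i), which only concerns $\PAut(C)=\PAut(C^{\perp})$, but not for (ii) and (iii). You were right to single this step out as the delicate one, but a complete proof must either restate the hypotheses of (ii) and (iii) in terms of $C^{\perp}$ or restrict those parts to $\dim(C)=1$; it cannot simply ``translate'' them.
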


\begin{proof} 
If $\dim(C)=n-1$, then $\dim(C^{\perp})=1$ and $\PAut(C)=\PAut(C^{\perp})$. Hence it is enough to 
prove all assertions for $1$-dimensional linear codes. 
Assume that $\dim(C)=1$ and the weight of $C$ is equal to $d$. It follows that 
$C$ is generated by $v$ where $v$ has exactly $d$ number of $1$'s. Then since $\PAut(C)=\stab_{S_n}(v)$, 
we have that $$\PAut(C)\cong S_d \times S_{n-d}.$$ Hence we have that $|\PAut(C)| > 2$.
So the assertion (i) is proved.

If $\wt(C)=d$ is even, there is a fixed point free involution $\gamma$ in $S_d$, and 
since $n-d$ is also even, there is a fixed point free involution $\delta$ in $S_{n-d}$. 
Then we have that $\langle \gamma \delta \rangle \leq \PAut(C)$ 
and it follows that $C$ is a quasi-$\langle \gamma \delta \rangle$ code. From the structure of $\PAut(C)$, it is easy to see 
that we can not find a transitive subgroup, hence $C$ is not a group code which finishes the proof of (ii).

Now let us prove the third assertion. Suppose that $C$ is a quasi-$H$ code for $H\leq \PAut(C)$. 
Then it follows that the order of $H$ is a divisor of $n=2^r$ and $H$ is a free 
subgroup of $S_n$. So there is a fixed point free involution $\gamma \in H$. It 
follows that $\gamma$ fixes $v$ and so the number of $1$'s 
in $v$ should be even since $\gamma$ is an involution. The converse direction 
follows from (ii).
\end{proof}

Now let us investigate linear codes with dimension or co-dimension equal to $2$.

\begin{Theorem}  \label{dim2} Let $C$ be a binary linear code of length $n=2 m$ where $m\geq 3$. 
If $\dim(C)=2$ or $\dim(C)=n-2$, then $\PAut(C)\ncong C_2$. 
\end{Theorem}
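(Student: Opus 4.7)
The plan is to reduce to $\dim(C)=2$ via duality and then pin down $|\PAut(C)|$ through a coordinate partition induced by a chosen basis of $C$.

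First, I would use the fact that $\PAut(C)=\PAut(C^\perp)$ together with $\dim(C^\perp)=n-\dim(C)$ to reduce the case $\dim(C)=n-2$ to the case $\dim(C)=2$. So assume $\dim(C)=2$ and fix a basis $\{v,w\}$ of $C$. Each coordinate $i\in\{1,\ldots,n\}$ has a type $(v_i,w_i)\in\FF_2^2$, and I denote by $a,b,c,d$ the numbers of coordinates of types $(0,0), (0,1), (1,0), (1,1)$ respectively; then $a+b+c+d=n$.

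The next step is to describe $\PAut(C)$ explicitly in terms of these data. A permutation $\sigma\in S_n$ preserves $C$ if and only if it permutes the three nonzero codewords $\{v,w,v+w\}$, and this in turn induces a permutation of the three nonzero types (the type $(0,0)$ is fixed). For a given type-permutation $\pi\in S_3$ to be realized by some $\sigma$, the sizes of paired types under $\pi$ must agree. A careful count then yields
$$|\PAut(C)| \;=\; a!\,b!\,c!\,d!\cdot|H|,$$
where $H\leq S_3$ is the subgroup preserving the count-function on the three nonzero types (equivalently, the stabilizer of the weight function on $\{v,w,v+w\}$).

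Finally, I would conclude by case analysis on how the product $a!b!c!d!\cdot|H|$ can equal $2$. Either (i) $a!b!c!d!=1$ with $|H|=2$, in which case all counts are $\leq 1$ and exactly two of $b,c,d$ are equal, forcing $n\leq 3$; or (ii) $a!b!c!d!=2$ with $|H|=1$, in which case exactly one count equals $2$, the remaining three are $\leq 1$, and $b,c,d$ are pairwise distinct, forcing $n\leq 4$. Either scenario contradicts $n=2m\geq 6$, so $\PAut(C)\not\cong C_2$.

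The main obstacle would be establishing the formula $|\PAut(C)|=a!\,b!\,c!\,d!\cdot|H|$: one must verify that every coordinate permutation preserving $C$ decomposes as an intra-type permutation followed by a realizable $S_3$-induced type-swap, and that each realizable $\pi\in H$ contributes exactly $a!\,b!\,c!\,d!$ automorphisms regardless of its cycle type.
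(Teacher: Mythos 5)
Your proposal is correct, and it takes a genuinely different route from the paper. The paper argues by contradiction: assuming $\PAut(C)=\langle\beta\rangle$ for an involution $\beta$ with a fixed cycle type, it uses the fixed subcode $F_\beta(C)$ (via Lemma \ref{dim3}) to write $C=\{0,x,w,w^\beta\}$ and then explicitly exhibits a second transposition in $\PAut(C)$, with separate case analyses for $\beta=(1,2)\cdots(t,t+1)$ with $t\geq 3$, for $\beta=(1,2)$ with $m\geq 4$, and for $\beta=(1,2)$ with $m=3$. You instead compute $|\PAut(C)|$ exactly for \emph{any} $2$-dimensional code via the column-type decomposition, and your key formula is sound: the map $\sigma\mapsto(c\mapsto c^\sigma)$ is a homomorphism from $\PAut(C)$ into $GL(C)\cong S_3$ whose kernel is precisely the group of permutations preserving each type class (order $a!\,b!\,c!\,d!$), and whose image, under the identification of $GL(2,2)$ with the permutations of the three nonzero types, is exactly the count-preserving subgroup $H$; this settles the "main obstacle" you flag, since the fibers over realizable elements of $H$ are cosets of the kernel. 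Your concluding case analysis is also right (in case (i) note that $(b,c,d)$ with entries in $\{0,1\}$ and exactly two equal sums to at most $2$, and in case (ii) the count equal to $2$ cannot be $a$), giving $n\leq 4$ in all scenarios. What your approach buys is generality and sharpness: it determines $|\PAut(C)|$ for every $2$-dimensional binary code, shows that $|\PAut(C)|=2$ forces $n\leq 4$, and in fact recovers the length-$4$ examples of Proposition \ref{characterization} (e.g.\ $(a,b,c,d)=(1,1,2,0)$ gives $a!\,b!\,c!\,d!=2$ and $|H|=1$). What the paper's more hands-on method buys is that its fixed-subcode technique extends to higher dimensions, where an exact order formula of your type is no longer available; that machinery is what drives Theorems \ref{interval} and \ref{dim4}.
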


\begin{proof}
Suppose to the contrary that $C$ is a binary linear code of length $n \geq 6$ 
and $\PAut(C)= \langle \beta \rangle$ where $\beta$ is an involution in $S_n$. 
It is enough to prove the theorem when $\dim(C)=2$ since $\PAut(C)=\PAut(C^{\perp})$ 
and $\dim(C^{\perp})=n-\dim(C)$.

Since any conjugate of a permutation automorphism of $C$ is a permutation automorphism 
of the permutation equivalent code of $C$, it is enough to prove the result by considering all possible
involutions $\beta \in S_n$ with different cycle structures. Thus we can assume without loss of generality that $\beta =(1,2)(3,4) \ldots (t, t+1)$ where 
$1\leq t \leq (n-1)$ is an arbitrary odd integer. Let us first assume that
$3 \leq t \leq (n-1)$. Then by Lemma \ref{dim3}, the dimension $f$ of $F_{\beta}(C)$ is 
equal to $1$.
So $$C=\{0, x, w, w^{\beta} \}$$
where $x\in F_{\beta}(C)$ and $w\not \in F_{\beta}(C)$ and $w^{\beta}=w+x$. 
Since $x \in F_{\beta}(C)$, we know that $x_i=x_{i+1}$ for all odd $i$ where $i\leq t$. If there is an 
odd integer $j\leq t$ such that $x_j=x_{j+1}=0$, then since $w^{\beta}=w+x$, it implies 
that $w_j=w_{j+1}$. Hence the transposition $(j,j+1)$ fixes all elements on $C$, so it lies in $\PAut(C)$, 
which is a contradiction. If for any odd integer $i\leq t$ we have $x_i=x_{i+1}=1$, then since $w^{\beta}=w+x$, it implies 
that $w_i \neq w_{i+1}$ for any odd integer $i\leq t$. Then since $t \geq 3$, there exist two different integers $k, l \leq t+1$ 
such that $w_k=w_l$ so that $(w^{\beta})_k=(w^{\beta})_l$. Then it follows that the transposition $(k, l)$ fixes all 
elements in $C$, so $(k, l)\in \PAut(C)$. But since $(k, l)$ is different from $\beta$ this is a contradiction.  

Now suppose $t=1$, then $\beta=(1, 2)$. Suppose that $m\geq 4$. 
Let $u$ and $w$ be two non-zero elements of $C$. We claim 
that there exist at least two integers $k, l \in \{3, 4, \ldots, n\}$ such that $u_k=u_l$ and $w_k=w_l$. If this 
is not true, since $n\geq 8$, there exist at least three elements $k, t, l \in \{3, 4, \ldots, n\}$ such that 
$$u_k=u_t \text{ and } w_k\neq w_t, \ u_t=u_l \text{ and } w_t \neq w_l$$ 
But since these elements take only the values of $0$ and $1$, it would imply that $u_k=u_l \text{ and } w_k=w_l$. 
It follows that the transposition $(k, l)$ which is different from $\beta=(1, 2)$ fixes $u, w$ and $u+w$ hence is 
in $\PAut(C)$, which is a contradiction.

Suppose that $t=1$ and $m=3$, that is $\beta=(1, 2)$. 
Let $u$ and $w$ be two non-zero elements of $C$. 
If we can find $k, l \in \{3, 4, 5, 6\}$ such that $u_k=u_l$ and $w_k=w_l$ as in the previous paragraph then we are done. 
So let us assume that such $k$ and $l$ do not exist. Then it follows that there exist exactly two  integers $k, l \in \{3, 4, 5, 6\}$ 
with the property that $u_k=u_l=1$ and that $k', l' \in \{3, 4, 5, 6\}$ 
with the property that $u_{k'}=u_{l'}=0$. This implies that $w_k\neq w_l$ and $w_{k'}\neq w_{l'}$. 
If $u_1=1$, then $(1, k)$ and $(1, l)$ fixes $u$. 
Since $w_k\neq w_l$, we have that either $w_1=w_k$ or $w_1=w_l$. If $w_1=w_k$, then the transposition $(1, k)$ fixes 
$w$ and if $w_1=w_l$ then $(1, l)$ fixes $w$. 
If $u_1=0$. then both $(1, k')$ and $(1, l')$ fixes $u$.
Since $w_{k'}\neq w_{l'}$, we have that either $w_1=w_{k'}$ or $w_1=w_{l'}$, similarly 
choose either of the transposition $(1, k')$ or $(1, l')$ which fixes $w$ in this case. Since in either case we find 
a transposition in $\PAut(C)$ which is different from $\beta$, we arrive to a contradiction. This finishes the proof.

\end{proof}

\begin{Remark}
The lower bound for $m$ in Theorem \ref{dim2} is sharp. Indeed, we show that there exists a linear code of length $4$ whose 
permutation automorphism group is isomorphic to $C_2$ in the proof of the following result. When the length is 
equal to $2$, then $C=\FF_2^2$ and $\PAut(C)=C_2$ in this case.
\end{Remark}

\begin{Proposition} \label{characterization} Let $C$ be a binary linear code of length 
$4$ and $\dim(C)=2$. Let $\beta \in S_4$ be a transposition. Then $\PAut(C)=<\beta>$ if 
and only if for every $c\in C$ we have $c^{\beta}=c$ and $A_i(C)=1$ for $i=0,1,2,3$ and $A_4(C)=0$. 

\end{Proposition}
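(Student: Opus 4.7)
Since conjugating by a permutation transforms $C$ into a permutation equivalent code with conjugate $\PAut$-group, and any transposition is conjugate in $S_4$ to $(1,2)$, I may assume throughout that $\beta=(1,2)$. The two implications are handled separately.

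\emph{Forward direction.} Suppose $\PAut(C)=\langle\beta\rangle$. I first show $C=F_\beta(C)$. By Lemma \ref{halfdim}, $f:=\dim F_\beta(C)\geq 1$. Assume for contradiction that $f=1$, so $C=\{0,x,w,w^\beta\}$ with $x=w+w^\beta$. Because $\beta$ acts trivially on coordinates $3,4$ and non-trivially on $w$, one computes $w_1\neq w_2$ and $x=(1,1,0,0)$. If $w_3=w_4$, the transposition $(3,4)$ fixes every element of $C$, producing an element of $\PAut(C)\setminus\langle\beta\rangle$, a contradiction. If $w_3\neq w_4$, the three nonzero codewords all have weight $2$ and their supports are the three $2$-element subsets of a common $3$-element subset of $\{1,\ldots,4\}$, so the symmetric group on that $3$-set is contained in $\PAut(C)$, again a contradiction. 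Hence $f=2$, i.e.\ $C=F_\beta(C)$.

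Now $F_\beta(C)=\{(a,a,b,c):a,b,c\in\FF_2\}$ has exactly seven $2$-dimensional subspaces. The plan is to enumerate them, compute the weight distribution of each, and verify that among these seven codes only the two with weight distribution $(1,1,1,1,0)$ are compatible with $\PAut(C)=\langle\beta\rangle$; for every other distribution one easily exhibits an additional transposition in $\PAut(C)$ (for instance, when two nonzero codewords share the same weight, they typically force either a swap of coordinates $3,4$ or a swap of the singleton-support coordinate with another). This completes the forward direction.

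\emph{Reverse direction.} Suppose $C\subseteq F_\beta(C)$ and the weight distribution is $(1,1,1,1,0)$. The unique weight-$1$ codeword is $\beta$-fixed, so its support lies in $\{3,4\}$. The unique weight-$2$ codeword is $\beta$-fixed, so its support is $\{1,2\}$ or $\{3,4\}$; the latter is impossible because then adding the weight-$1$ and weight-$2$ codewords would produce another weight-$1$ codeword, contradicting $A_1(C)=1$. Thus the three nonzero codewords have supports of sizes $1$, $2$, $3$ respectively, where the singleton lies in $\{3,4\}$, the $2$-set is $\{1,2\}$, and the $3$-set is their union. Any $\sigma\in\PAut(C)$ must fix the singleton support, fix $\{1,2\}$ setwise, and consequently fix the fourth coordinate, leaving $\sigma\in\{\mathrm{id},(1,2)\}$. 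Hence $\PAut(C)=\langle\beta\rangle$.

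The main obstacle is the finite case analysis in the forward direction that rules out the six $2$-dimensional subspaces of $F_\beta(C)$ whose weight distribution is not $(1,1,1,1,0)$; this is routine bookkeeping, and the conceptual content is that having three nonzero codewords of pairwise distinct weights is precisely what breaks every coordinate symmetry except $\beta$ itself.
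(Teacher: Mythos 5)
Your proposal is correct in substance and follows the paper's proof for its first half (both reduce to $\beta=(1,2)$ and begin the forward direction by showing $F_\beta(C)=C$; your observation that in the case $w_3\neq w_4$ the three supports are the three $2$-subsets of a common $3$-set, giving an $S_3$ inside $\PAut(C)$, is a slicker way to reach the contradiction than the paper's explicit hunt for a single extra transposition). The two arguments then diverge: to pin down the weight distribution the paper argues structurally --- it first excludes $1111$ by a pigeonhole on coordinates $2,3,4$ (if $u+w=1111$ then some $u_k=u_l$ forces $w_k=w_l$, yielding an extra transposition), and then determines the possible codewords directly --- whereas you propose to enumerate the seven $2$-dimensional subspaces of the $3$-dimensional space $F_\beta(\FF_2^4)$ and check each. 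That enumeration does work (the seven subspaces are generated by pairs from $\{1100,0010,0001,0011,1110,1101,1111\}$; exactly two of them, $\langle 1100,0010\rangle$ and $\langle 1100,0001\rangle$, have distribution $(1,1,1,1,0)$, and each of the remaining \emph{five} --- not six, as you write --- visibly admits $(3,4)$ or a transposition fixing three coordinates), but as written you only announce the check as a ``plan'' with a heuristic about what ``typically'' happens; for the argument to stand on its own you must actually list the five remaining subspaces and exhibit the extra automorphism in each, which is a short but non-optional table. Your reverse direction is complete and is in fact more careful than the paper's (which ends with ``it is not difficult to see''): the weight-preservation argument forcing any automorphism to fix the singleton support, stabilize $\{1,2\}$, and hence fix the fourth coordinate is exactly the right way to make that step precise.
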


\begin{proof}
Without loss of generality we can assume that $\beta=(1, 2)$.

Suppose that $\PAut(C)=<\beta>$. We first claim that $F_{\beta}(C)=C$. If $F_{\beta}(C)\neq C$, then 
the dimension of $F_{\beta}(C)$ is equal to $1$. Let $x$ be the non-zero element of $F_{\beta}(C)$, then 
$$C=\{0, x, w, w^{\beta} \}$$ 
where $w\not\in F_{\beta}(C)$ and $x=w+w^{\beta}$. Since $w\not\in F_{\beta}(C)$, we have that $w_1 \neq w_2$, 
so that $x_1=x_2=1$. Moreover, since $x=w+w^{\beta}$ we have that $x_3=x_4=0$. Hence, $x=1100$. If $w_3=w_4$ then 
the transposition $(3, 4)$ fixes $w$ and $w^{\beta}$, as well as $x$. So we get a contradiction. So $w_3 \neq w_4$. Since 
we also have that $w_1 \neq w_2$, we have that either $w_1=w_3$ or $w_1=w_4$. Assume without loss of generality that 
$w_1=1$. Then if $w_1=w_3$ or $w_1=w_4$, then $(1, 3)$ or $(1, 4)$ fixes $w$ respectively. In the first case $(w^{\beta})_2=(w^{\beta})_3=1$, 
so that $(w^{\beta})^{(1, 3)}=1100=x$ and $x^{(1, 3)}=w^{\beta}$. In the second case $(w^{\beta})_2=(w^{\beta})_4=1$, 
so that $(w^{\beta})^{(1, 4)}=1100=x$ and $x^{(1, 4)}=w^{\beta}$. In both of the cases, we obtain a transposition in $\PAut(C)$ 
which is different from $\beta$. So we have that $F_{\beta}(C)=C$, or equivalently for all $c\in C, c^{\beta}=c$.

Our next claim is $1111\not \in C$. Assume to the contrary that $1111\in C$. Then there are non-zero elements 
$u, w \in C$ such that $u+w=1111$. There are $k, l \in \{2, 3, 4\}$ such that $u_k=u_l$. But since 
$u+w=1111$, we have that $w_k=w_l$. Thus $(k, l)$ which is different from $\beta$ lies in $\PAut(C)$, a contradiction. So 
$A_4(C)=0$. 

Since $c^{\beta}=c$ for all $c\in C$, we have that $c_1=c_2$ for all $c\in C$. If $c_1=c_2=0$ for all 
$c\in C$, then there is an element $v\in C$ such that $v_3 \neq v_4$ (because otherwise there are less than 4 elements). But this 
implies that there is $u\in C$ such that $u\neq v$ with $u_3\neq u_4$. Then the involution $(3, 4)$ flips $u$ and $v$ and it fixes $u+v$. 
So $(3, 4)\in \PAut(C)$, a contradiction. So there is at least one element $w\in C$ such that $w_1=w_2=1$. Then it follows that $w=1110$ or $w=1101$ since 
$1111 \not \in C$. If $w=1110$, then the other elements of $C$ are $1100$ and $0010$. If $w=1101$, then the other elements of $C$ are $1100$ and $0001$. 
Hence, $A_i(C)=1$ for $i=0,1,2,3$. 

Suppose that for any $c\in C$ we have $c^{\beta}=c$ and $A_i(C)=1$ for $i=0,1,2,3$ and $A_4(C)=0$. 
Then we have that $c_1=c_2$ for any $c\in C$ and $1111 \not\in C$. Then since there exists a unique element with 
weight $1$, there is an element $u\in C$ such that $u=0010$ or $u=0001$. Moreover, since there is a unique element with 
weight equal to $3$, we have that either $1110$ or $1101$ is in $C$. Then we have that $C=\{0000, 0010, 1100, 1110\}$ or 
$C=\{0000, 0001, 1100, 1101\}$. Now it is not difficult to see that $\PAut(C)=\langle \beta \rangle$.
\end{proof}

\section{Fixed point free automorphisms}

Throughout this section let $C$ be a binary linear code of length $n=2m$ and let 
$\sigma \in \PAut(C)$ be a fixed point free involutory permutation. Then without 
loss of generality we can assume 

\[\sigma= \prod_{1\leq i\leq n, i \text{ odd}}(i, i+1)=(1,2)(3,4) \ldots (n-1, n).\]
Then it is easy to see that $c=c_1 c_2\ldots c_n \in F_{\sigma}(C)$ if and only if 
$c_i=c_{i+1}$ for all odd $i \in \{1, 2, \ldots, n\}$. For an element $c \in F_{\sigma}(C)$ 
let us define the set of all odd integers in the support of $c$ as $T_c$, that is
 $$T_c:=\{1\leq i \leq n\ | \ i \text{ is odd and } c_i c_{i+1}=11 \}.$$

\begin{Lemma}\label{wx}
Let $C$ be a binary linear code of length $n=2m$ with $\sigma =(1,2)(3,4) \ldots (n-1, n) \in \PAut(C)$. Let $w\in C \backslash F_{\sigma}(C)$ and 
$y:=w+w^{\sigma} \in F_{\sigma}(C)$. Assume that $x$ is a non-zero element in $F_{\sigma}(C)$ such that $T_x \subseteq T_y$, 
then the involution $\alpha_x= \Pi_{i \in T_x} (i, i+1)$ in $S_n$ satisfies $w^{\alpha_x}=w+x.$
\end{Lemma}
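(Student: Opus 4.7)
The plan is to verify the identity $w^{\alpha_x} = w + x$ coordinate-block by coordinate-block on the consecutive pairs $(i, i+1)$ with $i$ odd. The whole content of the lemma is concentrated in how the hypothesis $T_x \subseteq T_y$ forces exactly the right amount of disagreement of $w$ on the pairs that $\alpha_x$ swaps.

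First I would translate the hypotheses into pointwise conditions on $w$. Because $\sigma$ swaps $i$ with $i+1$ for each odd $i$, one has $(w^{\sigma})_i = w_{i+1}$ and $(w^{\sigma})_{i+1} = w_i$, so $y_i = y_{i+1} = w_i + w_{i+1}$ for each odd $i$. Thus $T_y$ is exactly the set of odd indices $i$ with $w_i \neq w_{i+1}$, while odd $i \notin T_y$ forces $w_i = w_{i+1}$. The containment $T_x \subseteq T_y$ then yields the key relation: for each $i \in T_x$, we have $w_{i+1} = w_i + 1 = w_i + x_i$, since by the definition of $T_x$ we have $x_i = x_{i+1} = 1$ on such coordinates.

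Next I would compute both sides of the target equality on each block $(i, i+1)$ with $i$ odd. If $i \in T_x$, then the transposition $(i, i+1)$ occurs in $\alpha_x$, so $(w^{\alpha_x})_i = w_{i+1} = w_i + x_i = (w + x)_i$ and symmetrically $(w^{\alpha_x})_{i+1} = w_i = w_{i+1} + x_{i+1} = (w + x)_{i+1}$. If $i \notin T_x$, then $\alpha_x$ fixes both coordinates $i$ and $i+1$, and simultaneously $x_i = x_{i+1} = 0$ because $x \in F_{\sigma}(C)$ and $i \notin T_x$, so both sides agree with $w$ on this block. Running over all odd $i$ exhausts every coordinate in $\{1, \ldots, n\}$ and gives the claim.

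The proof has no real obstacle; the only preliminary point to record is that $\alpha_x$ is a genuine involution in $S_n$, because its constituent transpositions $(i, i+1)$, indexed by the odd integers $i \in T_x$, have pairwise disjoint supports and therefore commute. After that observation, the argument reduces to the bookkeeping above, whose essential step is the implication ``$i \in T_x$ forces $w_i \neq w_{i+1}$'', delivered directly by the containment $T_x \subseteq T_y$.
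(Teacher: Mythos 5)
Your proof is correct and follows essentially the same route as the paper's: both arguments verify $w^{\alpha_x}=w+x$ block by block on the pairs $(i,i+1)$ with $i$ odd, using the key observation that $T_y$ is exactly the set of odd $i$ with $w_i\neq w_{i+1}$, so $T_x\subseteq T_y$ forces $w_i\neq w_{i+1}$ on every swapped pair. Your explicit remark that $\alpha_x$ is a well-defined involution because its transpositions have disjoint supports is a small point the paper leaves implicit, but otherwise the two arguments coincide.
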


\begin{proof}
Since $w^{\sigma}=w+y$ and $T_x \subseteq T_y$, it is easy to see that if $i \in T_x$ then $x_ix_{i+1}=11$ and $w_i\neq w_{i+1}$.
So $(w+x)_i (w+x)_{i+1}=w_{i+1}w_i$ when $i\in T_x$. 
Moreover, if $i\not\in T_x$ then $x_i x_{i+1}=00$, so that $(w+x)_i (w+x)_{i+1}=w_{i}w_{i+1}$. As a result, for 
any odd $i$, we have that
$$(w+x)_i (w+x)_{i+1}= \begin{cases}
  w_{i+1} w_i  & \text{ if } i \in T_x \\
  w_i w_{i+1} & \text{ if } i\not\in T_x
\end{cases} .$$
On the other hand, by the definition of $\alpha_x$, 
for any odd $i$ we have that
$$(w^{\alpha_x})_i (w^{\alpha_x})_{i+1}= \begin{cases}
  w_{i+1} w_i  & \text{ if } i \in T_x \\
  w_i w_{i+1} & \text{ if } i\not\in T_x
\end{cases}. $$
These altogether imply that $w^{\alpha_x}=w+x$.
\end{proof}

\begin{Remark}
Note that if $x=11\ldots11$, then $\alpha_x=\sigma$ in the lemma above.
\end{Remark}

The following result is a kind of generalization of Lemma \ref{dim3} in the presence of an involutory fixed point free 
permutation automorphism. By 
Proposition \ref{dim1}, there is no binary linear code with length $4$ and dimension $3$ 
whose permutation automorphism group is isomorphic to $C_2$. Note also 
that $\PAut(\FF_2^4)=S_4$. As a consequence of these, the result below 
is concerned with codes of length at least $6$.

\begin{Theorem} \label{interval}
Suppose that $C$ is a binary linear code of dimension $k\geq 3$ and length $n=2m$ with $m\geq 3$ 
such that $\PAut(C)=\langle \sigma \rangle$ for $\sigma =(1,2)(3,4) \ldots (n-1, n)$,
then $\dim(F_{\sigma}(C)) \neq k-1$ so that $\lceil k/2 \rceil \leq \dim(F_{\sigma}(C)) \leq k-2$. In particular, 
$k\neq 3$ or equivalently there is no $3$-dimensional linear code $C$ of even length with the 
property that  $\PAut(C)=\langle \sigma \rangle$.  
\end{Theorem}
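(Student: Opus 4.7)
The plan is to assume for contradiction that $f := \dim(F_{\sigma}(C)) = k - 1$ and then produce an involution in $\PAut(C)$ that is neither $\id$ nor $\sigma$, contradicting $\PAut(C) = \langle \sigma \rangle$. Since $[C : F_{\sigma}(C)] = 2$, we may write $C = F_{\sigma}(C) \sqcup (w + F_{\sigma}(C))$ for any fixed $w \in C \setminus F_{\sigma}(C)$; the vector $y := w + w^{\sigma}$ then lies in $F_{\sigma}(C) \setminus \{0\}$ (because $w^{\sigma} \neq w$) and is independent of the chosen coset representative.

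The central task is to exhibit a non-zero element $x \in F_{\sigma}(C)$ with $T_x \subseteq T_y$ and $x \neq 11\cdots 11$. Given such an $x$, Lemma \ref{wx} provides the involution $\alpha_x = \prod_{i \in T_x}(i,i+1)$ satisfying $w^{\alpha_x} = w + x$. Every transposition $(i, i+1)$ for $i$ odd fixes each element of $F_{\sigma}(C)$ pointwise (since codewords of $F_{\sigma}(C)$ satisfy $c_i = c_{i+1}$ on each $\sigma$-orbit), so $\alpha_x$ fixes $F_{\sigma}(C)$ elementwise. Combined with
\[
(w + z)^{\alpha_x} \;=\; w^{\alpha_x} + z \;=\; (w + x) + z \;\in\; w + F_{\sigma}(C),
\]
for every $z \in F_{\sigma}(C)$, this yields $\alpha_x \in \PAut(C)$. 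The conditions $x \neq 0$ and $x \neq 11\cdots 11$ translate precisely into $\alpha_x \neq \id$ and $\alpha_x \neq \sigma$, delivering the contradiction.

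To produce $x$, I would split on whether $y$ equals the all-ones vector. If $y \neq 11\cdots 11$, simply take $x = y$. The subtle case, which is where the hypothesis $k \geq 3$ is essential, is $y = 11\cdots 11$: then $T_y = \{1,3,\ldots,n-1\}$ and the containment $T_x \subseteq T_y$ becomes vacuous, so one only needs a non-zero $x \in F_{\sigma}(C) \setminus \{11\cdots 11\}$. Since $\dim F_{\sigma}(C) = k - 1 \geq 2$, the subspace $F_{\sigma}(C)$ has at least three non-zero vectors, so such an $x$ exists. This all-ones case is the main (minor) obstacle; everything else is mechanical.

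Finally, combining $f \neq k - 1$ with the upper bound $f \leq k - 1$ of Lemma \ref{dim3} (applicable since $\sigma = (1,2)(3,4)\cdots(n-1,n)$ has odd $t = n - 1 \geq 5$) gives $f \leq k - 2$, while Lemma \ref{halfdim} supplies $f \geq \lceil k/2 \rceil$. When $k = 3$ this forces $2 \leq f \leq 1$, which is impossible, so no $3$-dimensional code with $\PAut(C) = \langle \sigma \rangle$ exists.
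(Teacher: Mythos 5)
Your proof is correct and follows essentially the same route as the paper's: both arguments reduce to producing, via Lemma \ref{wx}, a nonzero $x \in F_{\sigma}(C)$ with $T_x \subseteq T_y$ and $x \neq 11\cdots 11$, so that $\alpha_x$ is an involution in $\PAut(C)$ lying outside $\langle \sigma \rangle$, and both use $\dim F_{\sigma}(C) = k-1 \geq 2$ to find such an $x$ when $y$ is the all-ones word. The only difference is organizational: the paper first forces $y = 11\cdots 11$ by noting that $w_i = w_{i+1}$ for some odd $i$ would place the transposition $(i,i+1)$ in $\PAut(C)$, whereas you dispose of the case $y \neq 11\cdots 11$ by applying Lemma \ref{wx} with $x = y$ directly.
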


\begin{proof}
By Lemma \ref{dim3}, it is enough to show that $F_{\sigma}(C)$ can not be $(k-1)$-dimensional. Suppose to the contrary that 
$F_{\sigma}(C)$ is $(k-1)$-dimensional. Let $w\in C\backslash F_{\sigma}(C)$, then $w^{\sigma} \neq w$ and 
$w+w^{\sigma} \in F_{\sigma}(C)$. 

If for some odd $i$, $w_i=w_{i+1}$ then the transposition $(i, i+1)$ fixes $w$ as well as any element of $F_{\sigma}(C)$. 
Moreover, since $F_{\sigma}(C)$ is 
$(k-1)$-dimensional any element of $C$ can be written as $\lambda_1 w+ \lambda_2 c$ where $c\in F_{\sigma}(C)$ 
and $\lambda_1, \lambda_2 \in \FF_2$. 
Hence, $(i, i+1)$ fixes every element of $C$, that is $\PAut(C)\neq \langle \sigma \rangle$, a contradiction. 

Hence, for all odd $i \in \{1, 2, \ldots, n\}$, $w_i\neq w_{i+1}$, and it follows that 
$$(w+w^{\sigma})_i=(w+w^{\sigma})_{i+1}=w_i+w_{i+1}=1,$$ that is $w+w^{\sigma}=y =(1 1 \ldots 11)$. 
Now, since $\dim(F_{\sigma}(C))=k-1 \geq 2$, there is a non-zero element $x \in F_{\sigma}(C)$ 
such that $x \neq y$, so $x\neq (1 1 \ldots 1 1)$. Then $T_x$ is non-empty, since $x$ is a non-zero 
element of $F_{\sigma}(C)$.
Also since $x\neq (1 1 \ldots 1 1)$, we have that $T_x$ is strictly contained $T_y$ and hence the permutation 
$$\alpha_x= \Pi_{i \in T_x} (i, i+1)$$
is an involution in $S_n$ which is different from $\sigma$. Hence, from Lemma \ref{wx}, we 
have that $w^{\alpha_x}=w+x$. Since any element of $C$ 
can be written as $\lambda_1 w+ \lambda_2 c$ where 
$c\in F_{\sigma}(C)$ and $\lambda_1, \lambda_2 \in \FF_2$ and 
that $\alpha_x$ fixes any element of $F_{\sigma}(C)$, we deduce 
that $\alpha_x \in \PAut(C)$, which contradicts with the fact that $\PAut(C)=\langle \sigma \rangle$. 
This finishes the proof.
\end{proof}

\begin{Theorem}\label{dim4}
There is no $4$-dimensional binary linear code $C$ of length $n=2m$ with the property that $\PAut(C)=\langle \sigma \rangle$ where $\sigma =(1,2)(3,4) \ldots (n-1, n)$. 
\end{Theorem}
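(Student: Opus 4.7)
The plan is to argue by contradiction: suppose $C$ is $4$-dimensional, of length $n = 2m$, with $\PAut(C) = \langle \sigma \rangle$ for $\sigma = (1,2)(3,4)\cdots(n-1,n)$. By Lemma~\ref{halfdim} and Theorem~\ref{interval}, $\dim F_{\sigma}(C) = 2$. The linear map $C/F_\sigma(C) \to F_\sigma(C)$ given by $w + F_\sigma(C) \mapsto w + w^\sigma$ is injective between $2$-dimensional spaces, hence an isomorphism; pick $w_1, w_2 \in C \setminus F_\sigma(C)$ so that $y_1 := w_1 + w_1^\sigma$ and $y_2 := w_2 + w_2^\sigma$ form a basis of $F_\sigma(C)$ (so that $w_1 + F_\sigma(C)$ and $w_2 + F_\sigma(C)$ form a basis of the quotient). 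Setting $y_3 := y_1 + y_2$, partition the set $I$ of odd indices as
\[
A = T_{y_1} \setminus T_{y_2}, \quad B = T_{y_2} \setminus T_{y_1}, \quad D = T_{y_1} \cap T_{y_2},
\]
so that $T_{y_1} = A \cup D$, $T_{y_2} = B \cup D$ and $T_{y_3} = A \cup B$. An odd $i \notin A \cup B \cup D$ would give $(w_j)_i = (w_j)_{i+1}$ and $(y_j)_i = 0$ for $j = 1, 2$, placing the transposition $(i, i+1)$ in $\PAut(C) \setminus \langle \sigma \rangle$; hence $A \cup B \cup D = I$.

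If one of $A, B, D$ is empty, say $A = \emptyset$ (the other cases being symmetric), then $T_{y_1} = D \subseteq T_{y_1}$ and $T_{y_1} = D \subseteq B \cup D = T_{y_2}$, so Lemma~\ref{wx} with $x = y_1$ applied to each of $w_1, w_2$ gives $w_j^{\alpha_{y_1}} = w_j + y_1 \in C$, placing $\alpha_{y_1}$ in $\PAut(C)$. Since $y_1 \neq y_2$ forces $B \neq \emptyset$, one has $T_{y_1} = D \subsetneq I$, so $\alpha_{y_1} \neq \sigma$, contradicting $\PAut(C) = \langle \sigma \rangle$.

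The harder case is when $A, B, D$ are all non-empty. In this regime, a short direct check confirms that no $\alpha_x$ with $x \in F_\sigma(C) \setminus \{0\}$ lies in $\PAut(C)$, so the plan is to construct a different kind of involution $\tau$: a block-swap permutation that exchanges two distinct pairs $\{a, a+1\}$ and $\{b, b+1\}$, either in parallel as $(a,b)(a+1,b+1)$ or anti-parallel as $(a, b+1)(a+1, b)$, possibly composed with within-pair swaps at some other pairs. Such a $\tau$ preserves $F_\sigma(C)$ setwise exactly when either blocks $a, b$ share the same type (so $F_\sigma(C)$ is fixed pointwise) or the block-constant vector supported on pairs $a, b$ belongs to $F_\sigma(C)$ (equivalently $\{a, b\} = T_{y_\ell}$ for some $\ell$). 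The condition $\tau(w_j) \in C$ then reduces to a direct verification on the patterns $(w_j)|_a, (w_j)|_b$. A sub-case analysis based on the cardinalities of $A, B, D$ finishes the argument: whenever some type contains at least two pairs, a within-type swap works via a pigeonhole on the local patterns of $w_1, w_2$; and in the borderline case $|A| = |B| = |D| = 1$ (so $n = 6$), each $T_{y_\ell}$ has cardinality $2$, making a cross-type swap available.

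The main technical obstacle is the completeness of this sub-case analysis: one must exhibit, in every possible pattern configuration, an explicit $\tau \neq \sigma$ in $\PAut(C)$, which requires careful bookkeeping of the correction $w_j + \tau(w_j)$. The non-emptiness of $A, B, D$ combined with the coset structure of $w_j + F_\sigma(C)$ is what ultimately forces a compatible swap to exist in every case, and it is this generalization of the $\alpha_x$ construction that Section~5 further develops to attack the putative extremal self-dual $[72, 36, 16]$ and $[96, 48, 20]$ codes.
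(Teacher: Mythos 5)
Your setup coincides with the paper's: reduce to $\dim F_{\sigma}(C)=2$ via Theorem \ref{interval}, choose $w_1,w_2$ so that $y_j=w_j+w_j^{\sigma}$ form a basis of $F_{\sigma}(C)$, and partition the odd indices into $A=T_{y_1}\setminus T_{y_2}$, $B=T_{y_2}\setminus T_{y_1}$, $D=T_{y_1}\cap T_{y_2}$. Your treatment of the case $A\cup B\cup D\neq I$ (a single transposition $(i,i+1)$ already fixes everything) is correct and in fact a little cleaner than the paper's Case 1, and the case where one of the three sets is empty is handled exactly as in the paper's Case 2(a) via Lemma \ref{wx}; the appeal to symmetry for $D=\emptyset$ is legitimate once you replace the basis $\{w_1,w_2\}$ by $\{w_1,w_1+w_2\}$, though you should say so.

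The genuine gap is the case where $A$, $B$, $D$ are all non-empty, which is the bulk of the paper's proof, and your sketch of it is not merely incomplete but incorrect as stated. The claim that ``whenever some type contains at least two pairs, a within-type swap works via a pigeonhole'' fails in at least two configurations. First, if say $T_{y_1}\setminus T_{y_2}=\{k,l\}$ with $(w_2)_k(w_2)_{k+1}\neq (w_2)_l(w_2)_{l+1}$, then every parallel or anti-parallel swap of the blocks $k$ and $l$ changes $w_2$ by the indicator vector of those two blocks, which is not in $C$ because $D\neq\emptyset$ forces $\{k,l\}\subsetneq T_{y_1}$; the paper's Case 2(d) exists precisely for this situation and needs the composite involution $\bigl(\prod_{i\in T_{y_2}\setminus T_{y_1}}(i,i+1)\bigr)(k_1,l_1)(\sigma(k_1),\sigma(l_1))$, which sends $w_2$ to $w_2+y_1+y_2$ rather than fixing it. Second, for two blocks $i,j\in D$ there are four possible local patterns of the pair $(w_1,w_2)$, falling into two ``equal-or-reversed'' classes, so with $|D|=2$ the pigeonhole does not close; the paper's Case 2(b)(ii) handles the bad configuration with the involution $(k,k+1)(i,j)(i+1,j+1)$, again mixing a within-pair swap at an $A$-block with a cross-swap of two $D$-blocks. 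Finally, the borderline case $|A|=|B|=|D|=1$ (so $n=6$) is settled in the paper not by any swap but by duality: $\dim(C^{\perp})=2$ and Theorem \ref{dim2} applies; your asserted ``cross-type swap'' there is unverified and would have to move a block of type $A$ onto a block of type $D$, which changes the parity pattern of $w_2$ on that block and cannot be corrected by adding an element of $C$. So the architecture of your argument is the right one, but the exhaustive case analysis you defer is exactly where the content of the theorem lies, and the mechanisms you propose do not cover all cases.
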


\begin{proof}
Suppose to the contrary that $C$ is a $4$-dimensional binary linear code of length $n=2m$ with the property that $\PAut(C)=\langle \sigma \rangle$. 
If $n=4$, we have that $C=\FF_2^4$ and $\PAut(C)=S_4$, so we have that $n=2m \geq 6$.
Furthermore, by Theorem \ref{interval}, we have that $\dim(F_{\sigma}(C))=2$. Let $\{x, y\}$ be a basis for $F_{\sigma}(C)$ and extend it to a basis 
$$\CB=\{ x, y, w, u\}$$ 
of $C$. Then $w+w^{\sigma}$ and $u+u^{\sigma}$ lie in $F_{\sigma}(C)$. If $w+w^{\sigma}=u+u^{\sigma}$, then we have that 
$w+u \in F_{\sigma}(C) \cap \langle w, u \rangle=\{0\}$, which implies that $w=u$. So we have that $w+w^{\sigma}\neq u+u^{\sigma}$. 
Hence without loss of generality, assume that $w+w^{\sigma}=x$ and $u+u^{\sigma}=y.$ Then 
$i \in T_x$ if and only if $w_i\neq w_{i+1}$. Similarly, $i \in T_y$ if and only if $u_i\neq u_{i+1}$. 

Note that $T_x \neq T_y$ since $x\neq y$. Set $T:=T_x \cup T_y$. 

{\bf Case 1:} Assume that $T$ is not equal to the set of all odd integers inside $\{1,2, \ldots, n\}$. Set 
$$\alpha=\Pi_{i\in T} (i, i+1).$$
Since $T$ is non-empty and not equal to the set of all odd integers inside $\{1,2, \ldots, n\}$, 
$\alpha$ is an involution different from $\sigma$ and it fixes both $x$ and $y$. Moreover, 
$$(w^{\alpha})_i (w^{\alpha})_{i+1}= \begin{cases}
  w_{i+1} w_i  & \text{ if } i \in T \\
  w_i w_{i+1} & \text{ if } i\not\in T
\end{cases}. $$
Since any $i\in T_x \iff w_i\neq w_{i+1}$, we have that 
$$(w+x)_i (w+x)_{i+1}= \begin{cases}
  w_{i+1} w_i  & \text{ if } i \in T_x \\
  w_{i} w_{i+1}=w_{i+1}w_i  & \text{ if } i \in T\backslash T_x \\
  w_i w_{i+1} & \text{ if } i\not\in T
\end{cases} .$$
Hence, $w^{\alpha}=w+x$. Similarly, it is easy to see that $u^{\alpha}=u+y$. Therefore, $\alpha \in \PAut(C)$ 
which is a contradiction. 

{\bf Case 2:} Assume that $T$ is equal to the set of all odd integers inside $\{1,2, \ldots, n\}$. \\

{\bf Case 2(a):} One of $T_x \backslash T_y$ or $T_y \backslash T_x$ is an empty set (note that if both of these 
sets are empty then $T_x=T_y$ which is not possible). Without loss 
of generality assume that $T_x \backslash T_y =\emptyset$. Then $T_x$ is strictly smaller than $T_y=T$. Consider the involution
$$\alpha_x=\Pi_{i\in T_x} (i, i+1).$$
Then $\alpha_x$ is a non-trivial involution different from $\sigma$.
We have that $\alpha_x$ fixes $x$ and $y$. Since $T_x \subset T_y$, by Lemma \ref{wx}, we have that 
$w^{\alpha_x}=w+x$ and $u^{\alpha_x}=u+x$. Therefore, $\alpha_x \in \PAut(C)$ 
which is a contradiction. \\

{\bf Case 2(b):}  Both $T_x \backslash T_y$ and $T_y \backslash T_x$ has a unique element. Then it follows that $T_x \cap T_y$ 
has $m-2\geq 1$ elements. \\

{\bf Case 2(b) (i):} Assume $m=3$ or equivalently $T_x \cap T_y$ has a unique element. Then the length of the code is $2m=6$ and 
so $\dim(C^{\perp})=6-4=2$. But this contradicts with Theorem \ref{dim12}, since $\PAut(C)=\PAut(C^{\perp})$.\\

 {\bf Case 2(b) (ii):} Assume $T_x \cap T_y=\{i, j\}$ with the property that 
$u_i = u_j$ and $w_i\neq w_j$. Let $T_x \backslash T_y=\{k\}$ and $T_y \backslash T_x=\{l\}$. Set 
 $$\alpha= (k, k+1) (i, j) (i+1, j+1).$$ 
 Then $\alpha$ is a non-trivial involution different from $\sigma$.
Since $x_ix_{i+1}=11=x_j x_{j+1}$ and $y_iy_{i+1}=11=y_j y_{j+1}$, we have that $\alpha$ fixes $x$ and $y$. 
 Moreover, by the assumption on $u$, it follows that $u_{i+1}=u_{j+1}$  and $u_{k}=u_{k+1}$ and so we have that $\alpha$ fixes $u$. 
 Since $k, i , j \in T_x$, we have that $w_k \neq w_{k+1}$ and also $w_i \neq w_j \implies w_j =w_{i+1}$ and $w_{j+1}=w_i$, we have that 
 for any odd $r$
 $$(w^{\alpha})_r (w^{\alpha})_{r+1}= \begin{cases}
  w_{k+1} w_k  & \text{ if } r=k \\
  w_j u_{j+1} & \text{ if }  r=i \\
   w_{i} w_{i+1} & \text{ if } r=j \\
    w_{r} w_{r+1} & \text{ otherwise } 
\end{cases}$$
$$(w+x)_r (w+x)_{r+1}= \begin{cases}
  w_{k+1} w_k  & \text{ if } r=k \\
  w_j u_{j+1} & \text{ if }  r=i \\
   w_{i} w_{i+1} & \text{ if } r=j \\
    w_{r} w_{r+1} & \text{ otherwise } 
\end{cases}. $$
Therefore, we get that 
 $w^{\alpha}=w+x$. So 
 $\alpha\in \PAut(C)$ and we have a contradiction. \\

 {\bf Case 2(b) (iii):} Assume $T_x \cap T_y$ has at least two elements $i$ and $j$ with the property that 
 $u_i=u_j$ and $w_i=w_j$. Set 
 $$\alpha= (i, j)(i+1, j+1).$$ 
 Then since $x_ix_{i+1}=11=x_j x_{j+1}$ and $y_iy_{i+1}=11=y_j y_{j+1}$, we have that $\alpha$ fixes $x$ and $y$. 
 Since $i$ and $j$ satisfy $u_i=u_j$ and $w_i=w_j$, then it follows that $u_{i+1}=u_{j+1}$ and $w_{i+1}=w_{j+1}$ since 
 $u_i\neq u_{i+1}, \ u_j\neq u_{j+1}$. So $\alpha$ fixes both $w$ and $u$. Hence, $\alpha \in \PAut(C)$, a contradiction 
 since $\alpha$ is an involution which is different from $\sigma$.  \\
 
{\bf Case 2(b) (iv):} Assume $T_x \cap T_y$ has at least two elements $i$ and $j$ with the property that 
$u_i\neq u_j$ and $w_i\neq w_j$. Set 
 $$\alpha= (i, j+1)(i+1, j).$$ 
 Then since $x_ix_{i+1}=11=x_j x_{j+1}$ and $y_iy_{i+1}=11=y_j y_{j+1}$, we have that $\alpha$ fixes $x$ and $y$. 
 Since $i$ and $j$ satisfy $u_i\neq u_j$ and $w_i\neq w_j$, then it follows that $u_{i}=u_{j+1}$ and $w_{i}=w_{j+1}$ since 
 $u_i\neq u_{i+1}, \ u_j\neq u_{j+1}$. Similarly, $u_{i+1}=u_{j}$ and $w_{i+1}=w_{j}$. So $\alpha$ fixes both $w$ and $u$. 
 Hence, $\alpha \in \PAut(C)$, a contradiction since $\alpha$ is an involution which is different from $\sigma$.  \\
 
 {\bf Case 2(b) (v):} Assume $T_x \cap T_y$ has at least three elements. Suppose there is no pair of elements which satisfies Case 2(b)(iii) and 
 Case 2(b)(iv). But this is not possible. Indeed, say $i, j, k \in T_x \cap T_y$ such that $u_i =u_j$ and $w_i\neq w_j$. 
 If $u_i=u_k$ and $w_i \neq w_k$, then we have that $u_j=u_k$ and $w_j=w_k$, hence we are in Case 2(d)(ii). If  
  $u_i\neq u_k$ and $w_i = w_k$, then we have that $u_j\neq u_k$ and $w_j\neq w_k$, hence we are in Case 2(d)(iii). 
  So there is a pair of elements which satisfies Case 2(b)(iii) and 
 Case 2(b)(iv) and we are done by using the same methods in these cases. \\

{\bf Case 2(c):}  Assume that one of $T_x \backslash T_y$ or $T_y \backslash T_x$ has at least three elements. Without loss 
of generality assume that $T_x \backslash T_y$ has at least three elements. Among these elements there exist $k, l \in T_x \backslash T_y$ 
such that $u_k u_{k+1}= u_l u_{l+1} $ (since $k, l \not \in T_y$ we have that $u_k u_{k+1}, u_l u_{l+1} \in \{00, 11\}$). Since $k, l \in T_x$ 
we have that $w_k\neq w_{k+1}$ and $w_l\neq w_{l+1}$. Let us choose $k_1 \in \{k, k+1\}$ and $l_1 \in \{l, l+1 \}$ such that 
$w_{k_1}=w_{l_1}$. Then $\alpha=(k_1, l_1)$ fixes $w$ and $u$. It is not difficult to see that $\alpha$ fixes $x$ and $y$ also 
since $k, l\in T_x \backslash T_y$. So $\alpha\in \PAut(C)$ and this is a contradiction since $\alpha$ is an involution 
different from $\sigma$.

Note that the same proof works in the case that $T_x \backslash T_y=\{k, l\}$ has exactly two elements and $u_k u_{k+1}= u_l u_{l+1} $. \\

{\bf Case 2(d):}  Assume that one of $T_x \backslash T_y$ or $T_y \backslash T_x$ has exactly two elements which does not 
satisfy the condition in the previous sentence. Suppose that 
$T_x \backslash T_y=\{k, l\}$ and $u_k u_{k+1}\neq u_l u_{l+1} $. Since $k, l \in T_x$, we have that $x_k x_{k+1}=11=x_l x_{l+1}$ and $w_k \neq w_{k+1}$ and $w_l\neq w_{l+1}$. 
Also, since $k, l \not \in T_y$, we have that $y_k y_{k+1}=00=y_l y_{l+1}$ and $u_k  u_{k+1}, u_l u_{l+1} \in \{00, 11\}$. 
Let 
$k_1 \in \{k, k+1\}$ and $l_1 \in \{l, l+1\}$ such that $w_{k_1}=w_{l_1}$ so that $w_{\sigma(k_1)}=w_{\sigma(l_1)}$. Consider the involution
$$\alpha= \big(\prod_{i \in T_y \backslash T_x}(i, i+1)\big) (k_1, l_1)(\sigma(k_1), \sigma(l_1)).$$
It is easy to see that $\alpha$ is a non-trivial involution which is different from $\sigma$.
Moreover, $\alpha$ fixes $x$ and $y$.
Since $w_i=w_{i+1}$ for all $i\in T_y \backslash T_x$, it is easy to see that $\alpha$ fixes $w$. 
For any odd $r$, we have that 
$$(u^{\alpha})_r (u^{\alpha})_{r+1}= \begin{cases}
  u_{r+1} u_r  & \text{ if } r \in T_y \backslash T_x \\
  u_r u_{r+1} & \text{ if }  r \in T_x\cap T_y \\
   u_{l} u_{l+1}=u_{l+1}u_l  & \text{ if } r=k \\
   u_{k} u_{k+1}=u_{k+1}u_k  & \text{ if } r=l 
\end{cases}. $$
Note that for odd $r$, we have that $(x+y)_r(x+y)_{r+1}=00$ if and only if $r \in T_x \cap T_y$. Also
for $r \in T_y \backslash T_x$, we have that $u_r \neq u_{r+1}$ and $u_k u_{k+1}+11=u_l u_{l+1}$ 
since $u_k u_{k+1}\neq u_l u_{l+1} $. As a result, we have that
$$(u+x+y)_r (u+x+y)_{r+1}= \begin{cases}
  u_{r+1} u_{r}  & \text{ if } r \in T_y \backslash T_x  \\
    u_{r} u_{r+1} & \text{ if } r \in T_x\cap T_y \\
   u_{l} u_{l+1} & \text{ if } r=k \\
   u_{k} u_{k+1}  & \text{ if } r=l
\end{cases} .$$
Hence, we deduce that $u^{\alpha}=u+x+y$ and it follows that $\alpha \in \PAut(C)$. However, 
$\alpha$ is an involution different from $\sigma$ so we have a contradiction. \\

Since we get a contradiction in all possible cases, the proof is finished.
\end{proof}
 
 \begin{Corollary}\label{sigma}
 There is no binary linear code $C$ of dimension $k$ and length $n=2m$ with $m\geq 2$ 
 such that $\PAut(C)=\langle \sigma \rangle$ where 
 $k=i$ or $k=n-i$ for $i \in \{1, 2, 3, 4 \}$ and $\sigma =(1,2)(3,4) \ldots (n-1, n)$.
 \end{Corollary}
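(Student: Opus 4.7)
The plan is to assemble this corollary by invoking the four previously established non-existence results: Proposition \ref{dim1} for dimension $1$, Theorem \ref{dim2} for dimension $2$, Theorem \ref{interval} for dimension $3$, and Theorem \ref{dim4} for dimension $4$. The first observation I would make is that duality halves the work: since $\PAut(C)=\PAut(C^{\perp})$ and $\dim(C^{\perp})=n-k$, a code $C$ of length $n$ and dimension $n-i$ has $\PAut(C)=\langle\sigma\rangle$ if and only if its dual does, and its dual has dimension $i$. So it suffices to prove non-existence when $k=i\in\{1,2,3,4\}$.

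Then I would dispatch the four dimensions in order. For $k=1$ and any $m\geq 2$, Proposition \ref{dim1}(i) gives $\PAut(C)\ncong C_2$, which in particular rules out $\PAut(C)=\langle\sigma\rangle$. For $k=2$ and $m\geq 3$, Theorem \ref{dim2} gives the stronger statement $\PAut(C)\ncong C_2$. For $k=3$, I would invoke the ``in particular'' clause of Theorem \ref{interval}, which applies whenever $m\geq 3$. For $k=4$, Theorem \ref{dim4} applies for all $m\geq 2$ (the extremal case $n=4$ is already recorded inside that proof, where $C=\FF_2^4$ has $\PAut(C)=S_4\neq\langle\sigma\rangle$).

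This bookkeeping leaves exactly two small-length pairs $(n,k)$ uncovered by the invoked theorems: $(4,2)$ and $(4,3)$. The latter reduces to the $(4,1)$ case by the duality observation above. The former is the only point where fresh work is required, and I expect it to be the main obstacle of the proof. I would handle it by a direct case analysis driven by Lemma \ref{halfdim}, which forces $\dim F_{\sigma}(C)\in\{1,2\}$. If $F_{\sigma}(C)=C$, then $C=\langle 1100,0011\rangle$, which is fixed pointwise by the transposition $(1,2)$, violating $\PAut(C)=\langle\sigma\rangle$. If $\dim F_{\sigma}(C)=1$, then $C=\{0,x,w,w+x\}$ with $x\in\{1100,0011,1111\}$, and a short inspection of the admissible values of $w$ in each of these three subcases exhibits either a transposition $(i,i+1)$ (when $w_i=w_{i+1}$ for some block where $x$ is $00$) or a double transposition such as $(1,3)(2,4)$ (in the $x=1111$ subcase) that preserves $C$ and differs from $\sigma$. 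Since only finitely many two-dimensional subspaces of $\FF_2^4$ are involved, the verification is routine and presents no serious difficulty beyond careful enumeration.
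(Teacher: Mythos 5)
Your proposal follows the same route as the paper's own proof, which consists precisely of the citation of Proposition \ref{dim1}, Theorem \ref{dim2}, Theorem \ref{interval} and Theorem \ref{dim4} together with the duality observation $\PAut(C)=\PAut(C^{\perp})$, $\dim(C^{\perp})=n-\dim(C)$. The one point where you go beyond the printed argument is worth emphasizing: the paper's proof silently skips the pair $(n,k)=(4,2)$, which is within the scope of the corollary (it allows $m\geq 2$) but is not covered by Theorem \ref{dim2}, since that theorem assumes $m\geq 3$ --- and it genuinely cannot be extended to $m=2$ in the form stated, because Proposition \ref{characterization} exhibits $[4,2]$ codes with $\PAut(C)\cong C_2$ generated by a transposition. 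Your separate treatment of $(4,2)$ is correct: Lemmas \ref{halfdim} and \ref{dim3} (the latter applies at $n=4$, $t=3$) force $\dim F_{\sigma}(C)=1$, so $C=\{0,x,w,w+x\}$ with $x\in\{1100,0011,1111\}$; the first two cases yield a fixing transposition $(3,4)$ or $(1,2)$ because $w_i=w_{i+1}$ on the block where $x$ vanishes, and in the case $x=1111$ the permutation $(1,3)(2,4)$ preserves $C=\{0,1111,w,\overline{w}\}$ and differs from $\sigma$. So the proposal is not merely correct; it is more complete than the paper's one-line proof, closing a minor but genuine gap in its coverage.
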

 
 \begin{proof}
 Follows from Proposition \ref{dim1}, Theorem \ref{dim2}, Theorem \ref{interval} and 
 Theorem \ref{dim4} since $\PAut(C)=\PAut(C^{\perp})$ and $\dim(C)=n-\dim(C^{\perp})$.
 \end{proof}
 
  \begin{Corollary}\label{quasi}
 There is no binary quasi group code $C$ of dimension $k$ and length $n=2m$ with $m\geq 2$ 
 such that $\PAut(C) \cong C_2$ where 
 $k=i$ or $k=n-i$ for $i \in \{1, 2, 3, 4 \}$. 
  \end{Corollary}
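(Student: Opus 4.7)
The plan is to reduce this to Corollary \ref{sigma} by unwinding the definition of a quasi group code and exploiting conjugation in $S_n$. Assume for contradiction that $C$ is a quasi group code of length $n=2m$ with $m\geq 2$, dimension $k$ with $k\in\{1,2,3,4\}$ or $n-k\in\{1,2,3,4\}$, and $\PAut(C)\cong C_2$. By definition, there is a non-trivial free subgroup $G\leq \PAut(C)$, and since $|\PAut(C)|=2$ the only possibility is $G=\PAut(C)\cong C_2$.

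Next I would observe that an element of $S_n$ generating a free subgroup of order $2$ must act without fixed points: if $\beta$ is the non-trivial element of $G$, then $\beta$ is an involution that fixes no point of $\{1,\ldots,n\}$. Consequently $\beta$ has cycle type consisting entirely of $2$-cycles, so $\beta$ is conjugate in $S_n$ to the canonical fixed point free involution $\sigma=(1,2)(3,4)\cdots(n-1,n)$.

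Choose $\gamma\in S_n$ with $\gamma^{-1}\beta\gamma=\sigma$ and set $C':=C^{\gamma}$. Then $C'$ is permutation equivalent to $C$, so it has the same length and the same dimension as $C$, and by the relation $\PAut(C^{\gamma})=\gamma^{-1}\PAut(C)\gamma$ recalled in Section~2 we obtain $\PAut(C')=\langle \sigma\rangle$. Thus $C'$ is a binary linear code of even length $n=2m\geq 4$, of dimension $k$ with $k$ or $n-k$ in $\{1,2,3,4\}$, whose permutation automorphism group equals $\langle\sigma\rangle$, contradicting Corollary \ref{sigma}.

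I do not expect any significant obstacle: the only non-cosmetic point is verifying that the free subgroup of $\PAut(C)$ guaranteed by the quasi group code hypothesis is genuinely non-trivial (so that we actually obtain a fixed point free involution rather than the trivial group); this is exactly what the definition of a quasi group code from \cite{BW2022} recalled in Section~2 supplies. Everything else is a routine conjugation argument that transports the hypothesis into the normal form required by Corollary \ref{sigma}.
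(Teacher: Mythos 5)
Your proposal is correct and follows essentially the same route as the paper: both deduce that the free subgroup guaranteed by the quasi group code definition must be all of $\PAut(C)$, hence generated by a fixed point free involution, then conjugate to the canonical $\sigma$ and invoke Corollary \ref{sigma}. Your write-up is slightly more explicit about why freeness forces the involution to be fixed point free, but the argument is the same.
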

 
 \begin{proof}
 Suppose that $C$ is a binary quasi group code such that $\PAut(C) \cong C_2$.
 Then $\PAut(C)=\langle \gamma\rangle $ where $\gamma$ is a fixed point free involution in $S_n$. 
 Then $\gamma$ and $\sigma$ are conjugate elements of $S_n$, that is $\sigma= \beta^{-1} \gamma \beta$ for some $\beta \in S_n$. 
 Let $C_1=C^{\beta}$, then $\PAut(C_1)=\langle \sigma \rangle$. But then Corollary \ref{sigma} gives us a contradiction.  
 \end{proof}

\begin{Corollary}
 There is no binary linear code $C$ of length $2<n=2m\leq 8$ with $\PAut(C)=\langle \sigma \rangle$ where $\sigma =(1,2)(3,4) \ldots (n-1, n)$.
 \end{Corollary}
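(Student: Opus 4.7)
The plan is essentially to reduce everything to Corollary \ref{sigma} by a dimension-counting argument, since the covered range $k\in\{1,2,3,4\}\cup\{n-1,n-2,n-3,n-4\}$ turns out to exhaust or nearly exhaust all possible dimensions when $n\leq 8$.

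Let $C$ be a binary linear code of length $n=2m$ with $n\in\{4,6,8\}$ and suppose $\PAut(C)=\langle\sigma\rangle$. Let $k=\dim(C)$, so $0\leq k\leq n$. First I would invoke Corollary \ref{sigma}, which rules out every $k$ lying in $\{1,2,3,4\}\cup\{n-1,n-2,n-3,n-4\}$. For $n=4$ these two sets together give $\{0,1,2,3,4\}$, i.e. every possible dimension, so the case $n=4$ is finished immediately. For $n=6$ the union is $\{1,2,3,4,5\}$, leaving only $k=0$ and $k=6$ to handle. For $n=8$ the union is $\{1,2,3,4,5,6,7\}$, again leaving only the two extremes $k=0$ and $k=8$.

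To finish, I would dispose of the degenerate dimensions $k=0$ and $k=n$ directly: in these cases $C=\{0\}$ or $C=\FF_2^n$, and every permutation in $S_n$ fixes $C$ setwise, so $\PAut(C)=S_n$. Since $n\geq 4$ we have $|S_n|>2$, so $\PAut(C)\ne\langle\sigma\rangle$, contradicting our assumption.

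There is no real obstacle here; the result is essentially a packaging corollary of the previous four structural theorems. The only thing worth double-checking is the bookkeeping: one must verify for each of $n=4,6,8$ that $\{1,2,3,4\}\cup\{n-1,n-2,n-3,n-4\}$ exhausts $\{1,\dots,n-1\}$, and separately observe that the trivial and full codes cannot have $\PAut$ equal to a cyclic group of order $2$.
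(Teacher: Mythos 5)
Your proof is correct and takes essentially the same approach as the paper, whose entire proof is ``Follows from Corollary \ref{sigma}.'' Your explicit bookkeeping that $\{1,2,3,4\}\cup\{n-1,n-2,n-3,n-4\}$ exhausts $\{1,\dots,n-1\}$ for $n\in\{4,6,8\}$, together with the separate observation that $C=\{0\}$ and $C=\FF_2^n$ have $\PAut(C)=S_n$, just makes explicit what the paper leaves implicit.
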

 
 \begin{proof}
 Follows from Corollary \ref{sigma}.
 \end{proof}
 
  \begin{Corollary}\label{length8}
 There is no binary quasi group code $C$ of length $2< n=2m\leq 8$ with $\PAut(C)\cong C_2$. 
  \end{Corollary}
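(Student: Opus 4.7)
The plan is to observe that Corollary \ref{length8} is essentially an immediate consequence of Corollary \ref{quasi} via a dimension enumeration, so the work amounts to verifying that for the small values of $n$ in question, every possible dimension of a binary linear code of length $n$ falls inside the range covered by Corollary \ref{quasi}.

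First I would dispose of the trivial dimensions: if $\dim(C)=0$ then $C=\{0\}$, and if $\dim(C)=n$ then $C=\FF_2^n$; in both cases $\PAut(C)=S_n$, which is not isomorphic to $C_2$ once $n\geq 3$. Since the statement restricts to $n>2$, these extremes are harmless and we may assume $1\leq \dim(C)\leq n-1$.

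Next I would enumerate the values of $n=2m$ with $2<n\leq 8$, namely $n\in\{4,6,8\}$, and check that every admissible dimension $k$ satisfies either $k\in\{1,2,3,4\}$ or $k\in\{n-4,n-3,n-2,n-1\}$. For $n=4$ we have $k\in\{1,2,3\}\subseteq\{1,2,3,4\}$; for $n=6$ we have $k\in\{1,2,3,4,5\}$, and $5=n-1$; for $n=8$ we have $k\in\{1,2,3,4,5,6,7\}$, with $5=n-3$, $6=n-2$, $7=n-1$. In every case $k$ falls in the prohibited window of Corollary \ref{quasi}, hence no quasi group code of that length and dimension can have permutation automorphism group isomorphic to $C_2$.

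The only ``obstacle'' is really bookkeeping: one must notice that the two windows $\{1,2,3,4\}$ and $\{n-4,\ldots,n-1\}$ overlap and together cover $\{1,\ldots,n-1\}$ precisely when $n\leq 8$, which is exactly the range asserted in the corollary, so no further argument beyond citing Corollary \ref{quasi} is needed.
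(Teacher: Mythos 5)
Your proposal is correct and follows exactly the paper's route: the paper's proof is the single line ``Follows from Corollary \ref{quasi}'', and your dimension bookkeeping (including the trivial cases $k=0$ and $k=n$) is precisely the verification implicit in that citation. Nothing further is needed.
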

 
 \begin{proof}
 Follows from Corollary \ref{quasi}.
 \end{proof}
 
 Our computer calculations give us the impression that the generalization of Corollary \ref{length8} may be possible. 
 But we can not give a proof of it, as can be seen from the proof of Theorem \ref{dim4} there will be many cases to consider. 
 We are claiming the following is true.
 
 \begin{Conjecture}
 There is no binary quasi group code $C$ of length $n=2m >2$ such that $\PAut(C)\cong C_2$. 
 \end{Conjecture}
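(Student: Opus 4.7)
The plan is to push the strategy of Theorems \ref{interval} and \ref{dim4} to arbitrary dimension $k$. By the conjugation argument used in Corollary \ref{quasi}, we may assume $\PAut(C) = \langle \sigma \rangle$ with $\sigma = (1,2)(3,4) \cdots (n-1, n)$. Let $f = \dim(F_\sigma(C))$; by Lemma \ref{halfdim} and Theorem \ref{interval}, $\lceil k/2 \rceil \leq f \leq k-2$ whenever $k \geq 3$. Fix a basis $\{x_1, \ldots, x_f\}$ of $F_\sigma(C)$ and extend to a basis $\{x_1, \ldots, x_f, w_1, \ldots, w_{k-f}\}$ of $C$, and set $y_j := w_j + w_j^\sigma \in F_\sigma(C)$. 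Because the $w_j$'s are independent modulo $F_\sigma(C)$, the elements $y_1, \ldots, y_{k-f}$ are linearly independent.

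Identify $F_\sigma(C)$ with its image $F^* := \{T_x : x \in F_\sigma(C)\} \subseteq \FF_2^m$, a linear code of dimension $f$. For each subset $S$ of odd positions the involution $\alpha_S := \prod_{i \in S}(i, i+1)$ fixes $F_\sigma(C)$ pointwise, and a direct generalization of Lemma \ref{wx} shows $\alpha_S \in \PAut(C)$ if and only if $S \cap T_{y_j} \in F^*$ for all $j = 1, \ldots, k-f$. Thus one seeks $S$ in the $\FF_2$-linear subspace $V := \bigcap_j \ker \Phi_j$, where $\Phi_j \colon \FF_2^m \to \FF_2^m / F^*$ is the linear map $S \mapsto [S \cap T_{y_j}]$, with $S \neq \emptyset$ and $S \neq \{1, 3, \ldots, n-1\}$; any such $S$ yields $\alpha_S \in \PAut(C) \setminus \langle \sigma \rangle$, a contradiction. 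Note that $V$ contains every $S$ disjoint from $\bigcup_j T_{y_j}$, so if $\bigcup_j T_{y_j}$ does not cover $\{1, 3, \ldots, n-1\}$ the conclusion follows immediately; this generalizes Case 1 of Theorem \ref{dim4}.

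When $\bigcup_j T_{y_j}$ covers all odd positions, $V$ may collapse to $\{0, (1, \ldots, 1)\}$, and one must invoke involutions in the centralizer of $\sigma$ in $S_n$ that are not of the form $\alpha_S$. Block-swap involutions such as $(a, b)(a+1, b+1)$ and $(a, b+1)(a+1, b)$, possibly multiplied by some $\alpha_S$, should work as in Cases 2(b)(ii)--(iv) and 2(d) of Theorem \ref{dim4}, provided one can locate a pair $\{a, b\}$ of block-indices on which the values of all $w_j$ and all $x_i$ match the required coincidence/complementarity pattern. The main obstacle is the combinatorial explosion: with $k - f$ independent vectors $w_1, \ldots, w_{k-f}$, the number of distinct pair-patterns $(w_{j,a} w_{j,a+1})_{j}$ across blocks is up to $4^{k-f}$, and the ad hoc enumeration used in Theorem \ref{dim4} does not scale. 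A promising unified route is to view $C$ as an $\FF_2[C_2]$-submodule of $(\FF_2[C_2])^m$ and leverage the classification of such submodules together with a pigeonhole argument on the $(k-f) \times m$ table of pair-patterns to always produce a non-trivial centralizing involution of $C$, thereby contradicting $\PAut(C) = \langle \sigma \rangle$. Making this argument work uniformly in $k - f$ and $m$ is the essential challenge.
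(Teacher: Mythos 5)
This statement is stated in the paper as a \emph{Conjecture}: the authors explicitly say they cannot prove it (``we can not give a proof of it, as can be seen from the proof of Theorem \ref{dim4} there will be many cases to consider''), so there is no proof in the paper to compare against, and your proposal does not close the gap either --- as you yourself concede in the final paragraph. The first half of your argument is sound and is essentially a clean linear-algebraic repackaging of what the paper already has: the reduction to $\sigma=(1,2)(3,4)\cdots(n-1,n)$ is the conjugation argument of Corollary \ref{quasi}; the criterion that $\alpha_S\in\PAut(C)$ if and only if $S\cap T_{y_j}\in F^*$ for all $j$ is a correct generalization of Lemma \ref{wx}; and the observation that a nonempty proper $S$ in $V=\bigcap_j\ker\Phi_j$ gives a contradiction subsumes Case 1 and Case 2(a) of Theorem \ref{dim4} as well as Theorem \ref{general}. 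This reformulation is arguably tidier than the paper's, and it isolates exactly where the difficulty lives.

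The genuine gap is the case you flag but do not resolve: when $V=\{0,(1,\dots,1)\}$, i.e.\ when no involution of the form $\alpha_S$ other than $\sigma$ itself preserves $C$. At that point you must leave the subgroup of ``block-respecting'' involutions $\alpha_S$ and use genuine block permutations such as $(a,b)(a+1,b+1)$ or $(a,b+1)(a+1,b)$, possibly composed with some $\alpha_S$. For such an involution to fix $F_\sigma(C)$ pointwise you already need the blocks $a$ and $b$ to agree on \emph{every} basis vector $x_1,\dots,x_f$ of $F_\sigma(C)$, and for it to normalize $C$ you need a compatible coincidence pattern on all of $w_1,\dots,w_{k-f}$ simultaneously. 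A pigeonhole count on the $m$ columns of patterns does not obviously succeed: the number of distinct column-patterns grows like $2^f\cdot 4^{k-f}$, which can exceed $m$ for $k$ close to $n/2$, so the existence of a usable pair of blocks is not guaranteed by counting alone. Your suggested module-theoretic route (viewing $C$ as an $\FF_2[C_2]$-submodule of $\FF_2[C_2]^m$) is plausible but is only named, not carried out. In short: the proposal correctly identifies and organizes the tractable cases, but the core of the conjecture --- the covering case with $V$ trivial --- remains unproved, exactly as it does in the paper.
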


\section{A generalization and an application to some putative extremal self-dual binary codes}

In this section, we present a generalization of the ideas that are used in the proof of Theorem \ref{dim4}. By applying 
this generalization to the putative extremal self-dual binary $[72, 36, 16]$ code and putative extremal self-dual binary 
$[96, 48, 20]$ code, we obtain results concerning the structure of these codes.   

Let $C$ be a binary linear code of length $n=2m$ and $\sigma= (1,2)(3,4) \ldots (n-1, n)\in \PAut(C)$. We 
have that the fixed subcode $F_{\sigma}(C)$ is a subcode of $C$. So there 
is a subcode $N_{\sigma}(C)$ such that 
$$C= F_{\sigma}(C) \oplus N_{\sigma}(C).$$ 
Let $\{ w_{(1)}, \ldots, w_{(k-f)}\}$ be a basis of $N_{\sigma}(C)$ where $k=\dim(C)$ and $f=\dim(F_{\sigma}(C))$. 
Then by Lemma \ref{halfdim}, we have that $f\geq (k-f)$. On the other hand, by the proof of Lemma \ref{halfdim}, 
there is a linearly independent subset $\{ x_{(1)}, \ldots, x_{(k-f)}\}$ of $F_{\sigma}(C)$ where $x_{(j)}=w_{(j)}+w_{(j)}^{\sigma}$ 
for $j=1, \ldots, k-f$. Now extend $\{ x_{(1)}, \ldots, x_{(k-f)}\}$ to a basis $\{ x_{(1)}, \ldots, x_{(f)}\}$ of $F_{\sigma}(C)$. 
Set $$T(\sigma):=T_{x_{(1)}} \cup T_{x_{(2)}} \cup \ldots \cup T_{x_{(k-f)}}$$
where $T_{x_{(j)}}$ is defined as in Section $4$. 
Then note that if $i\not\in T(\sigma)$ then $(x_{(j)})_{i}(x_{(j)})_{i+1}=00$ for any $j=1, \ldots, k-f$. So if $i\not\in T(\sigma)$
the equation  $x_{(j)}=w_{(j)}+w_{(j)}^{\sigma}$ implies that $(w_{(j)})_{i}=(w_{(j)})_{i+1}$ for any $j=1, \ldots, k-f$. 

\begin{Theorem}\label{general}
Let $C$ be a binary linear code of length $n=2m$ where $m\geq 2$ 
such that $\sigma=(1, 2)(3, 4)\ldots (n-1, n) \in \PAut(C)$. Suppose that $T(\sigma) \neq \{1, 3, \ldots, n-1\}$.
Then we have that $\PAut(C)\neq \langle \sigma \rangle$. In particular, $\PAut(C)$ has automorphisms which have some fixed points and 
that $C_2 \times C_2 \leq \PAut(C)$.
\end{Theorem}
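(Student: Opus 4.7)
The hypothesis gives an odd index $i_0 \in \{1,3,\ldots,n-1\}$ with $i_0 \notin T(\sigma)$. The plan is to produce an explicit transposition $\alpha = (i_0,\, i_0+1)$ and show $\alpha \in \PAut(C) \setminus \langle \sigma \rangle$, whence $\langle \sigma, \alpha \rangle \cong C_2 \times C_2$ sits inside $\PAut(C)$.

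First I would verify that $\alpha$ fixes every element of the subcode $F_{\sigma}(C)$: for any $c \in F_{\sigma}(C)$ one has $c_{i_0} = c_{i_0+1}$ by the very definition of $F_{\sigma}(C)$, so swapping these coordinates leaves $c$ unchanged. Next I would use the key observation recorded just before the theorem: because $i_0 \notin T(\sigma) = T_{x_{(1)}} \cup \cdots \cup T_{x_{(k-f)}}$, the relation $x_{(j)} = w_{(j)} + w_{(j)}^{\sigma}$ forces $(w_{(j)})_{i_0} = (w_{(j)})_{i_0+1}$ for every $j \in \{1,\ldots,k-f\}$, so $\alpha$ also fixes each $w_{(j)}$. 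Since $\{x_{(1)},\ldots,x_{(f)}\} \cup \{w_{(1)},\ldots,w_{(k-f)}\}$ is a basis for $C$, this proves $C^{\alpha} = C$, i.e.\ $\alpha \in \PAut(C)$.

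It remains to check that $\alpha$ produces new automorphisms and the claimed Klein four subgroup. The assumption $m \geq 2$ gives $n \geq 4$, so the transposition $\alpha$ has at least two fixed coordinates while $\sigma$ is fixed-point free; hence $\alpha \neq \sigma$ and $\alpha \notin \langle \sigma \rangle$, proving $\PAut(C) \neq \langle \sigma \rangle$ and exhibiting an automorphism with fixed points. Finally, $\alpha$ and $\sigma$ commute (both act as transpositions on the block $\{i_0, i_0+1\}$ and $\alpha$ acts trivially on all other $\sigma$-orbits), so $\langle \sigma, \alpha \rangle = \{1, \sigma, \alpha, \sigma\alpha\} \cong C_2 \times C_2 \leq \PAut(C)$.

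There is really no serious obstacle here: the statement is essentially the contrapositive of the engineering done in the paragraph preceding the theorem, which was carefully arranged so that coordinates outside $T(\sigma)$ behave identically on the chosen complementary basis $\{w_{(j)}\}$ as they do on $F_{\sigma}(C)$. The only point demanding a line of justification is the compatibility $(w_{(j)})_{i_0} = (w_{(j)})_{i_0+1}$, which I would just quote from that preceding discussion rather than reprove.
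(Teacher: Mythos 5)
Your proof is correct and follows essentially the same route as the paper: both rest on the observation, set up just before the theorem, that for an odd index $i$ outside $T(\sigma)$ every basis vector $x_{(j)}$ and $w_{(j)}$ (hence every codeword) agrees in coordinates $i$ and $i+1$. The only difference is cosmetic: you use a single transposition $(i_0,i_0+1)$ where the paper uses the product of $(i,i+1)$ over all odd $i\notin T(\sigma)$; your choice is, if anything, marginally cleaner since a lone transposition visibly has fixed points and is visibly distinct from $\sigma$.
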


\begin{proof}
Let $C$ be a binary linear code of dimension $k$. Let $\beta =\Pi_{i\not\in T_{\sigma}} (i, i+1)$.
Since $T(\sigma)\neq \{1, 3, \ldots, n-1\}$, the permutation $\beta$ is a non-trivial involution. 
Also if $i\not\in T_{\beta}$ then $(x_{(j)})_{i}(x_{(j)})_{i+1}=00$  and 
$(w_{(j)})_{i}=(w_{(j)})_{i+1}$for any $j=1, \ldots, k-f$ . Hence $\beta$ 
fixes all elements in $F_{\sigma}(C)$ and $N_{\sigma}(C)$ so it lies inside 
$\PAut(C)$. It is clear that $\beta$ has some fixed points and $\langle \sigma, \beta \rangle \cong C_2 \times C_2$
\end{proof}

\begin{Corollary} 
Let $C$ be a binary self-dual $[72, 36, 16]$ code. If $\sigma=(1, 2)(3, 4)\ldots (71, 72) \in \PAut(C)$, then $T(\sigma)$ is 
equal to the set of all odd integers between $1$ and $71$. 
\end{Corollary}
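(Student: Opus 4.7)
The plan is to prove this by contradiction using Theorem \ref{general} together with the known structural restriction on putative extremal self-dual $[72,36,16]$ codes due to Bouyuklieva (\cite[Theorem 5.3]{Bou2002}), which asserts that such a code cannot admit an involutory permutation automorphism with fixed points.

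First I would suppose, for contradiction, that $T(\sigma)$ is strictly contained in $\{1, 3, 5, \ldots, 71\}$. Then the hypothesis of Theorem \ref{general} is satisfied with $n=72$, so the theorem produces a non-trivial involution
\[
\beta = \prod_{i \notin T(\sigma)} (i,i+1) \in \PAut(C)
\]
which is distinct from $\sigma$ and commutes with it, giving $\langle \sigma,\beta\rangle \cong C_2 \times C_2 \leq \PAut(C)$.

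Next I would observe that, by construction, $\beta$ is a product of strictly fewer than $36$ transpositions (since at least one odd index $i \in \{1,3,\ldots,71\}$ is missing from the complement of $T(\sigma)$... wait, the construction is that $\beta$ uses the transpositions for $i \notin T(\sigma)$, so if $T(\sigma)$ is a proper subset the complement is non-empty; the point is that $\beta$ as an element of $S_{72}$ has $2 \cdot |\{1,3,\ldots,71\}\setminus T(\sigma)|$ moved points and $72 - 2|\{1,3,\ldots,71\}\setminus T(\sigma)|$ fixed points, which is a positive number as soon as not \emph{all} odd $i$ lie outside $T(\sigma)$). Since $T(\sigma)$ is non-empty (any nonzero element of $F_\sigma(C)$ produced from the decomposition contributes to $T(\sigma)$, as noted in the paragraph preceding Theorem \ref{general}), $\beta$ is indeed an involution in $\PAut(C)$ that has at least one fixed point on $\{1, \ldots, 72\}$.

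Finally, I would invoke \cite[Theorem 5.3]{Bou2002}: no putative self-dual $[72,36,16]$ code admits an involutory permutation automorphism with fixed points. The existence of $\beta$ therefore contradicts the assumption that $C$ is a $[72,36,16]$ self-dual code, completing the proof. The main (minor) obstacle is simply verifying that $T(\sigma)$ is non-empty, so that $\beta$ genuinely has fixed points rather than being fixed-point-free; this follows because if $T(\sigma) = \emptyset$ then $F_\sigma(C) \oplus N_\sigma(C)$ would force $N_\sigma(C) = 0$, giving $C = F_\sigma(C)$ and hence $\sigma = \id$, contrary to the hypothesis that $\sigma$ is the stated fixed-point-free involution acting on $C$.
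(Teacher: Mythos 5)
Your proposal is correct and takes essentially the same route as the paper: assume $T(\sigma)\neq\{1,3,\ldots,71\}$, invoke Theorem \ref{general} to produce an involution in $\PAut(C)$ with fixed points, and contradict \cite[Theorem 5.3]{Bou2002}. The only slip is in your auxiliary check that $T(\sigma)\neq\emptyset$: from $C=F_{\sigma}(C)$ you cannot conclude $\sigma=\id$ (it only means $\sigma$ acts trivially on $C$, which is compatible with $\sigma\in\PAut(C)$); the correct way to dispose of that degenerate case is to note that $C=F_{\sigma}(C)$ would put every transposition $(i,i+1)$ into $\PAut(C)$, again contradicting Bouyuklieva's theorem, or that the $36$-dimensional space of $\sigma$-fixed vectors has minimum weight $2$, contradicting $d=16$.
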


\begin{proof}
Suppose to the contrary that $T(\sigma)\neq \{1, 3, \ldots, 71\}$. Then by Theorem \ref{general}, there is a an involution 
in $\PAut(C)$ which is not fixed point free. But this contradicts with \cite[Theorem 5.3]{Bou2002}. So $T_{\sigma} = \{1, 3, \ldots, 71\}$. 

\end{proof}

\begin{Corollary} 
Let $C$ be a binary self-dual $[96, 48, 20]$ code. If $\sigma=(1, 2)(3, 4)\ldots (95, 96) \in \PAut(C)$, then $T(\sigma)$ is 
equal to the set of all odd integers between $1$ and $95$. 
\end{Corollary}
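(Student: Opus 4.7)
The plan is to follow verbatim the argument of the preceding corollary, substituting length $96$ for length $72$. Arguing by contradiction, I would assume that $T(\sigma) \neq \{1, 3, \ldots, 95\}$. Then Theorem \ref{general} produces a non-trivial involution $\beta \in \PAut(C)$ which fixes every coordinate of the form $i$ or $i+1$ with $i$ odd and $i \in T(\sigma)$ (and $T(\sigma)$ is non-empty, since otherwise $C = F_{\sigma}(C)$ forces $\sigma$ itself to act trivially, which is impossible). In particular, $\beta$ is an involutory permutation automorphism of $C$ with a non-empty fixed-point set.

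To finish, I would invoke the known fixed-point-freeness result for involutory permutation automorphisms of extremal doubly even self-dual codes of length $24m$ from \cite{Bou2002}. As recalled in the introduction of the present paper, \cite[Theorem 5.3]{Bou2002} is stated in the context of length $24m$ and applied in the cited literature to the $[72, 36, 16]$ case ($m=3$). Since $96 = 24 \cdot 4$ and a putative $[96, 48, 20]$ code is again an extremal doubly even self-dual code of length $24m$, the same theorem forbids any involutory automorphism with fixed points, contradicting the existence of $\beta$. Thus $T(\sigma) = \{1, 3, \ldots, 95\}$, as desired.

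The main obstacle is the bookkeeping of verifying that the fixed-point-freeness statement of \cite[Theorem 5.3]{Bou2002} indeed covers the $m=4$ case and not merely $m=3$. If the statement there is formulated only for $[72, 36, 16]$, one would replay its proof for length $96$: compute the weight enumerator of the fixed subcode $F_{\beta}(C)$ and of its complement $E_{\beta}(C) = (1+\beta)C$ using the Gleason-type restrictions on the weight distribution of an extremal doubly even self-dual code of length $24m$, and then derive a numerical contradiction from the extremality assumption $d=20$. This is a routine adaptation of the length-$72$ argument and does not interact with the new combinatorics developed around $T(\sigma)$ in Section~5.
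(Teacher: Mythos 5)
Your proof follows the paper's argument exactly: assume $T(\sigma) \neq \{1, 3, \ldots, 95\}$, apply Theorem \ref{general} to produce an involutory automorphism of $C$ with fixed points, and contradict the known fixed-point-freeness of involutions on a putative extremal self-dual code of length $96$. The only difference is the citation: for length $96$ the paper invokes \cite[Theorem 8]{Bou2000} rather than \cite[Theorem 5.3]{Bou2002} (which it reserves for the length-$72$ case), so the verification you flag as the main obstacle is resolved by switching references rather than by replaying the length-$72$ weight-enumerator argument.
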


\begin{proof}
Suppose to the contrary that $T(\sigma) \neq \{1, 3, \ldots, 95\}$. Then by Theorem \ref{general}, there is a an involution 
in $\PAut(C)$ which is not fixed point free. But this contradicts with \cite[Theorem 8]{Bou2000}. So $T(\sigma) = \{1, 3, \ldots, 95\}$. 

\end{proof}

\end{document}